\title{Reasoning about Quality and Fuzziness of Strategic Behaviours}
\author{Patricia~Bouyer}{LSV, CNRS \& ENS Paris-Saclay, Univ. Paris-Saclay, France}{}{}{}
\author{Orna~Kupferman}{Hebrew University, Israel}{}{}{}
\author{Nicolas~Markey}{Irisa, CNRS \& Inria \& Univ. Rennes, France}{}{}{}
\author{Bastien~Maubert}{Universit\'a degli Studi di Napoli ``Federico~II'', Italy}{}{}{}
\author{Aniello~Murano}{Universit\'a degli Studi di Napoli ``Federico~II'', Italy}{}{}{}
\author{Giuseppe~Perelli}{University of Leicester, UK}{}{}{}
\authorrunning{Anonymous}
\keywords{}
\begin{document}

\maketitle

\begin{abstract}
\newcommand\change[1]{{\color{magenta} #1}}

\emph{Temporal logics} are extensively used for the specification of
on-going behaviours of reactive systems. Two significant developments
in this area are the extension of traditional temporal logics with
modalities that enable the specification of on-going \emph{strategic}
behaviours in multi-agent systems, and the transition of temporal
logics to a \emph{quantitative} setting, where different satisfaction
values enable the specifier to formalise concepts such as certainty or
quality. We introduce and study \emph{\FSL}---a~quantitative extension
of~\SL (\emph{Strategy Logic}), one of the most natural and expressive
logics describing strategic behaviours. The~satisfaction value of an
$\FSL$ formula is a real value in~$[0,1]$, reflecting ``how much'' or
``how well''
the strategic on-going objectives of
the underlying agents are satisfied.  
We demonstrate the applications of \FSL in quantitative reasoning
about multi-agent systems, by showing how it can express concepts of
stability in multi-agent systems, and how it generalises some fuzzy
temporal logics. We also provide a model-checking algorithm for our
logic, based on a quantitative extension of Quantified \CTLs.



\end{abstract}

\section{Introduction}
\label{sec-intro}
One of the significant developments in formal reasoning has been the
use of \emph{temporal logics} for the specification of on-going
behaviours of reactive systems~\cite{Pnu81,CE81,EH86}. Traditional
temporal logics are interpreted over Kripke structures, modelling
closed systems, and can quantify the computations of the systems in a
universal and existential manner. The~need to reason about
multi-agents systems has led to the development of specification
formalisms that enable the specification of on-going strategic
behaviours in multi-agent
systems~\cite{AHK02,CHP10,MMPV14}. Essentially,
these formalisms, most notably \ATL, \ATLs, and Strategy Logic~(\SL),
include quantification of strategies of the different agents and of
the computations they may force the system into, making it possible to
specify concepts that have been traditionally studied in game theory.

The duration of games in game theory is typically finite and the outcome of the game depends on its final position \cite{NR99,NRTV07}.
In contrast, agents in multi-agent systems maintain an \emph{on-going interaction} with 
each other~\cite{HP85}, and reasoning about their behaviour refers
not to their final state (in~fact, we consider non-terminating
systems, with no final state) but rather to the \emph{language} of
computations that they generate.  While \SL, which subsumes \ATLs enables the specification
of rich on-going strategic behaviours, its semantics is Boolean:
a~system may satisfy a specification or it may~not.  The~Boolean
nature of traditional temporal logic is a real obstacle in the context
of strategic reasoning. Indeed, while many strategies may attain a
desired objective, they may do so at different levels of quality or
certainty. Consequently, designers would be willing to give up manual
design only after being convinced that the automatic procedure that
replaces it generates systems of comparable quality and certainty. For
this to happen, one should first extend the specification formalism to
one that supports quantitative aspects of the systems and the
strategies.

The logic $\FLTL$ is a multi-valued logic that augments \LTL with
quality operators~\cite{ABK16}. The satisfaction value of an $\FLTL$
formula is a real value in $[0,1]$, where the higher the value, the
higher the quality in which the computation satisfies the
specification.  The~quality operators in~$\Func$ can prioritise different
scenarios or reduce the satisfaction value of computations in which
delays occur.  For example, the~set~$\Func$ may contain the
$\min\{x,y\}$, $\max\{x,y\}$, and $1-x$ functions, which are the
standard quantitative analogues of the $\wedge$, $\vee$, and $\neg$
operators. The~novelty of \FLTL is the ability to manipulate values by
arbitrary functions. For~example, $\Func$~may contain the
weighted-average function $\avg{\lambda}$. The satisfaction value of
the formula $\psi_1\avg{\lambda}\psi_2$ is the weighted (according
to~$\lambda$) average between the satisfaction values of $\psi_1$
and~$\psi_2$. This enables the specification of the quality of the
system to interpolate different aspects of~it.  As~an example,
consider the \FLTL formula $\G({\it req} \rightarrow ({\it
  grant} \avg{\frac{2}{3}} \X {\it grant}))$. The formula
states that we want requests to be granted immediately and the grant
to hold for two transactions. When this always holds, the satisfaction
value is $\frac{2}{3}+\frac{1}{3}=1$. We are quite okay with grants
that are given immediately and last for only one transaction, in which
case the satisfaction value is $\frac{2}{3}$, and less content when
grants arrive with a delay, in which case the satisfaction value is
$\frac{1}{3}$.

We introduce and study \FSL: an analogous multi-valued extension
of~\SL.  In addition to the quantitative semantics that arises from the
functions in $\Func$, another important element of \FSL is that its
semantics is defined with respect to \emph{weighted} multi-agent
systems, namely ones where atomic propositions have truth values in
$[0,1]$, reflecting quality or certainty. Thus, a model-checking procedure for \FSL, which is our main contribution,  enables formal reasoning about both quality and fuzziness of strategic behaviours.

As a motivating example, consider security drones that may patrol
different height levels.  Using \FSL, we~can specify, for example
(see~specific formulas in Section~\ref{ssec-expressiveness}), the
quality of strategies for the drones whose objectives are to fly above
and below all uncontrollable drones and perform certain actions when
uncontrollable drones exhibit some disallowed behaviour. Indeed, the
multi-valued atomic propositions are used to specify the different
heights, temporal operators are used for specifying on-going
behaviours, the~functions in~$\Func$ may be used to refer to these
behaviours in a quantitative manner, for example to compare heights
and to specify the satisfaction level that the designer gives to
different possible scenarios. Note that the \FSL formula does not
specify the ability of the drones to behave in some desired
manner. Rather, it~associates a satisfaction value in $[0,1]$ with each
behaviour. This suggests that \FSL can be used not only for a
quantitative specification of strategic behaviours but also for
quantizing notions from game theory that are traditionally
Boolean. For example (see specific formulas in
Section~\ref{ssec-expressiveness}), beyond specifying that the agents
are in a Nash Equilibrium, we can specify how far they are from an
equilibrium, namely how much an agent may gain by a deviation that
witnesses the instability. \bmchanged{As a result we can express concepts such as
$\epsilon$-Nash Equilibria~\cite{NRTV07}.} 

The logic \FSL enables the quantification of strategies for the
agents. We show that the quantification of strategies can be reduced
to a Boolean quantification of atomic propositions, which enables us
to reduce the model-checking problem of \FSL to that of \BQCTLsf~--~an
extension of \CTLsf~\cite{ABK16} with quantified Boolean atomic
propositions.  A general technique for \CTLs model-checking algorithms
is to repeatedly evaluate the innermost state subformula by viewing it
as an (existentially or universally quantified) \LTL formula, and add
a fresh atomic proposition that replaces this
subformula~\cite{EL87}. This general technique is applied also to
\CTLsf, with the fresh atomic propositions being weighted
\cite{ABK16}. For \BQCTLsf formulas, however, one cannot apply
it. Indeed, the externally quantified atomic propositions may appear
in different subformulas, and we cannot evaluate them one by one
without fixing the same assignment for the quantified atomic
propositions. Instead, we extend the automaton-theoretic approach to
\CTLs model-checking \cite{KVW00} to handle quantified propositions:
given a \BQCTLsf formula $\psi$ and predicate $P \subseteq [0,1]$, we
construct an alternating parity tree automaton that accepts exactly
all the weighted trees $t$ such that the satisfaction value of $\psi$
in $t$ is in $P$. The translation, and hence the complexity of the
model-checking problem, is non-elementary. \bmchanged{More precisely we show that
it is \kEXPTIME[(k+1)]-complete for formulas involving at most $k$
quantifications on atomic propositions, and we show a similar
complexity result for \FSL, in terms of nesting of strategy quantifiers.}

\subparagraph{Related works.}
There have been long lines of works about games with quantitative
objectives (in~a broad sense), \textit{e.g.}~stochastic games~\cite{Sha53, FV97}, timed
games~\cite{AMPS98}, or weighted games with various kinds of objectives
(parity~\cite{EJ91}, mean-payoff~\cite{EM79} or
energy~\cite{CAHS03,BFLMS08}).  This does not limit to zero-sum games,
but also includes the study of various solution concepts (see for instance~\cite{Umm10,BMR14,BBD14,BPRS16,AKP18}).
%
Similarly, extensions of the classical temporal logics \LTL and \CTL
with quantitative semantics have been studied in different contexts,
with discounting~\cite{AFHMS05,ABK14}, averaging~\cite{BDL12,BMM14},
or richer constructs~\cite{BCHK14,ABK16}.
In~contrast, the study of quantitative temporal logics for strategic
reasoning has remained rather limited: works on \FLTL include
algorithms for synthesis and rational
synthesis~\cite{ABK16,AK16,AKP18,AKRV17}, but no logic combines the
quantitative aspect of \FLTL with the strategic reasoning offered
by~\SL. Thus, to the best of our knowledge, our model-checking
algorithm for \FSL is the first decidability result for a quantitative
extension of a strategic specification formalism (without restricting
to bounded-memory strategies).

Baier and others have focused on a variant of \SL in a stochastic
setting~\cite{BBGK12}; model checking was
proven decidable for memoryless strategies, and undecidable in the
general case. A~quantitative version of \SL with Boolean goals over
one-counter games has been considered in~\cite{BGM15}; only
a periodicity property was proven, and no model-checking algorithm is
known in that setting as well. Finally, Graded
\SL~\cite{AMMR18} extends \SL by quantifying on the number of
strategies witnessing a given strategy quantifier, and is decidable.

The other quantitative extensions we know of concern \ATL\slash \ATL*,
and most of the results are actually adaptations of similar
(decidability) results for the corresponding extensions of \CTL
and~\CTL*; this includes probabilistic \ATL~\cite{CL07},
timed \ATL~\cite{HP06,BLMO07}, multi-valued~\ATL~\cite{JKP16}, and
weighted versions of
\ATL~\cite{LMO06,BG13,Ves15}.  Finally, some works have considered
non-quantitative \ATL with quantitative constraints on the set of
allowed strategies~\cite{ALNR10,DM18}, proving decidability of the
model-checking problem.



\section{Quantitative Strategy Logic}
\label{sec-slf}


Let $\Sigma$ be an alphabet. A~\emph{finite} (resp. \emph{infinite})
\emph{word} over $\Sigma$ is an element of $\Sigma^{*}$
(resp.~$\Sigma^{\omega}$).  The \emph{length} of a finite word
$w=w_{0}w_{1}\ldots w_{n}$ is $|w|\egdef n+1$, and $\last(w)\egdef
w_{n}$ is its last letter.  Given a finite (resp. infinite) word $w$
and $0 \leq i < |w|$ (resp. $i\in\SetN$), we let $w_{i}$ be the letter
at position $i$ in $w$, $w_{\leq i}=w_0\ldots w_i$ is the (nonempty) prefix of
$w$ that ends at position $i$ and
$w_{\geq i}=w_iw_{i+1}\ldots$ is the suffix of $w$ that starts at
position~$i$. As~usual, for any partial function~$f$,
we~write~$\dom{f}$ for the domain of~$f$.

Strategy logic with functions, denoted \FSL, generalises both
\SL~\cite{CHP10,MMPV14} and \FLTL~\cite{ABK16} by replacing the
Boolean operators of \SL with arbitrary functions over $[0,1]$.  The
logic is actually a family of logics, each parameterised by a set
$\Func$ of functions.

\subsection{Syntax}
\label{sec-syntax-SLf}

We build on the branching-time variant of
\SL~\cite{DBLP:conf/csl/FijalkowMMR18}, which does not add
expressiveness with respect to the classic semantics~\cite{MMPV14} but
presents several benefits (see~\cite{DBLP:conf/csl/FijalkowMMR18} for
more details), one of them being that it makes the connection with
Quantified \CTL tighter.

\begin{definition}[Syntax]
  Let $\Func\subseteq \{f\colon [0,1]^m\to [0,1] \mid m \in \SetN \}$
  be a set of functions over $[0,1]$ of possibly different arities.
  The syntax of \FSL is defined with respect to a finite set of
  \emph{atomic propositions}~$\AP$, a~finite set of agents $\Agt$ and a set of
  strategy variables~$\Var$.
  The set of \FSL formulas is defined by the following grammar:
\begin{align*}
  	\varphi & \coloncolonequals p \mid \estrat\varphi \mid (\ag, x)  \varphi 
                 \mid \A \psi 
        \mid f(\varphi, \ldots, \varphi)\\
	\psi & \coloncolonequals \varphi \mid \X \psi \mid \psi \U \psi
        \mid f(\psi, \ldots, \psi)
\end{align*}
where $p\in\AP$,  $x\in\Var$, $\ag\in\Agt$, and $f\in \Func$.
\end{definition}
Formulas of type $\varphi$ are called \emph{state formulas}, those of type
$\psi$ are called \emph{path formulas}. 
Formulas $\estrat\varphi$ are called \emph{strategy quantifications}
whereas formulas $(\ag, x)  \varphi$ are called
\emph{bindings}. Modalities $\X$ and $\U$ are temporal modalities,
which take a specific quantitative semantics as we see below.

We may use $\top$, $\vee$, and $\neg$ to denote functions~$1$, $\max$ and
$1 - x$, respectively.
We can then define the following 
classic abbreviations: ${\perp\colonequals\neg\top}$, ${\varphi\wedge\varphi'
\colonequals \neg (\neg \varphi \vee \neg \varphi')}$, ${\varphi\rightarrow \varphi' \colonequals \neg
\varphi \vee \varphi'}$, ${\F\psi \colonequals \top \U \psi}$,
${\G\psi \colonequals \neg \F \neg \psi}$ and  ${\astrat\varphi\colonequals \neg\estrat\neg\varphi}$.

  Intuitively, the value of formula $\varphi
\vee \varphi'$ is the maximal value of the two formulas $\varphi$ and
$\varphi'$,  $\varphi \wedge \varphi'$ takes the minimal value
of the two formulas, and the value of $\neg \varphi$ is one minus that
of $\varphi$. The implication $\varphi \rightarrow \varphi'$
thus takes the maximal value between that of $\varphi'$ and one minus that
 of $\varphi$. 

In a Boolean setting, we assume that the values of the atomic
propositions are in $\{0,1\}$ ($0$ represents false whereas $1$
represents true), and so are the output values of functions in
$\Func$.  One can then check that $\varphi \vee \varphi'$, $\varphi
\wedge \varphi'$, $\neg\varphi$ and $\varphi \rightarrow \varphi'$ take their usual
Boolean meaning.

We will come back later to temporal modalities,  strategy
quantifications and bindings. 

\subsection{Semantics}

While \SL is evaluated on
classic concurrent game structures with Boolean valuations for atomic propositions,
  \FSL formulas are interpreted 
on weighted concurrent game structures, in which atomic propositions
have values in $[0,1]$, and that we now present.

\begin{definition}
  \label{def-wcgs}
  A \emph{weighted concurrent game structure}
  (WCGS) is a tuple $\calG =
  \tuple{\AP,\Agt,\Act,\Pos,\pos_\init,\Delta,\labels}$ where
 $\AP$ is a finite set of
  atomic propositions,    $\Agt$ is a finite set of agents,
  $\Act$~is a
  finite set of actions, $\Pos$~is a finite set of states,
  $\pos_\init \in \Pos$ is an initial state, $\Delta\colon \Pos
  \times\Act^\Agt\to \Pos$ is the transition function, and
  $\labels\colon \Pos\to[0,1]^{\AP}$ is a weight
  function. 
\end{definition}

An~element of $\Act^\Agt$ is a \emph{joint action}.  For ${\pos \in \Pos}$, we let $\succs(\pos)$ be the set
$\{\pos'\in \Pos\mid \exists \jact\in\Act^\Agt.\ \pos'=\Delta(\pos,\jact)\}$. For~the sake
of simplicity, we~assume in the sequel that
$\succs(\pos)\not=\varnothing$ for all~$\pos\in \Pos$.

A~\emph{play} in~$\calG$ 
is an
infinite sequence $\pi=(\pos_i)_{i \in \bbN}$ of states in~$\Pos$ such that
$\pos_0=\pos_\init$ and $\pos_i\in\succs(\pos_{i-1})$ for all~$i>0$.  We~write
$\Play_{\calG}$ for the set of plays in~$\calG$, and
$\Play_{\calG}(v)$ for the set of plays in~$\calG$ starting from~$v$.
In this
and all similar notations, we might omit to mention~$\calG$ when it is
clear from the context. 
A~(strict) prefix of a play~$\pi$ is a finite sequence
$\rho=(\pos_i)_{0\leq i\leq L}$, for some~$L\in\bbN$, which we denote
$\pi_{\le L}$. We~write
$\Prefx(\pi)$ for the set of strict prefixes of play~$\pi$. Such
finite prefixes are called \emph{histories}, and we~let
$\Hist_{\calG}(\pos)=\Prefx(\Play_\calG(\pos))$ and $\Hist_\calG =
\bigcup_{\pos \in \Pos} \Hist_{\calG}(\pos)$. We~extend the notion of
strict prefixes and the notation~$\Prefx$ to histories in the natural
way, requiring in particular that $\rho\notin\Prefx(\rho)$. A~(finite)
extension of a history~$\rho$ is any history~$\rho'$ such that
$\rho\in\Prefx(\rho')$.  

\smallskip
A~\emph{strategy} 
is a mapping
$\strat\colon \Hist_\calG \to \Act$, and we write $\StrSet$
for the set of strategies in $\calG$. 
An~\emph{assignment} is a
partial function $\chi\colon \Var \cup
\Agt\rightharpoonup\StrSet$, that assigns strategies to
variables and agents.
The assignment $\chi[\ag\mapsto\strat]$  maps $\ag$ to $\strat$
and is equal to $\chi$ otherwise. 
Let 
$\chi$ be an assignment and $\rho$ a history.
We define the set of  \emph{outcomes} of $\chi$ from $\rho$ as the set
$\Out(\chi,\rho)$ of plays $\pi=\rho\cdot \pos_1\pos_2\ldots$ such that
for every~${i \in \bbN}$, there exists a joint action $\jact\in\Act^\Agt$
such that for each agent $\ag\in\dom{\chi}, \jact(\ag)=\chi(\ag)(\pi_{\le
|\rho|+i-1})$ and $\pos_{i+1}=\Delta(\pos_i,\jact)$, where $\pos_0=\last(\rho)$.




\begin{definition}[Semantics]	
  Consider a WCGS
  $\calG=\tuple{\AP,\Agt,\Act,\Pos,\pos_\init,\Delta,\labels}$, a set
  of variables $\Var$, and a partial assignment $\chi$ of strategies
  for $\Agt$ and $\Var$. Given an \FSL state formula $\varphi$ and a
  history $\rho$, we use $\semSL{\chi}{\rho}{\varphi}$ to denote the
  satisfaction value of $\varphi$ in the last state of $\rho$ under
  the assignment $\chi$.  Given an \FSL path formula $\psi$, a play
  $\pi$, and a point in time $i\in\SetN$, we use
  $\semSL{\chi}{\pi,i}{\psi}$ to denote the satisfaction value of
  $\psi$ in the suffix of $\pi$ that starts in position $i$. The
  satisfaction value is defined inductively as follows:
  \begin{align*}
    \semSL{\chi}{\rho}{p} &= \labels(\last(\rho))(p) \\
    \semSL{\chi}{\rho}{\estrat\varphi} &=  \sup_{\strat \in \StrSet}
    \semSL{\chi[x \mapsto \strat]}{\rho}{\varphi} \\
    \semSL{\chi}{\rho}{(\ag,x)\varphi} & = \semSL{\chi[\ag \mapsto
      \chi(x)]}{\rho}{\varphi} \\
    \semSL{\chi}{\rho}{\A\psi} &=
    \inf_{\pi\in\Out(\chi,\rho)}\semSL{\chi}{\pi,|\rho|-1}{\psi} \\
    \semSL{\chi}{\rho}{f(\varphi_{1}, \ldots, \varphi_{m})} & =
    f(\semSL{\chi}{\rho}{\varphi_{1}}, \ldots,
    \semSL{\chi}{\rho}{\varphi_{m}}) \\
    \semSL{\chi}{\pi,i}{\varphi} &= \semSL{\chi}{\pi_{\le i}}{\varphi} \\
    \semSL{\chi}{\pi,i}{\X \psi} &= \semSL{\chi}{\pi,i+1}{\psi} \\
    \semSL{\chi}{\pi,i}{\psi_1\U\psi_2} & = \sup_{j \ge i}\min \left
      (\semSL{\chi}{\pi,j}{\varphi_2},\min_{k \in [i,j-1]}
      \semSL{\chi}{\pi,k}{\varphi_1}\right ) \\
    \semSL{\chi}{\pi,i}{f(\psi_{1}, \ldots, \psi_{m})} &= f(\semSL{\chi}{\pi,i}{\psi_{1}}, \ldots, \semSL{\chi}{\pi,i}{\psi_{m}})
	\end{align*}
\end{definition}
\bmcomment{please check} Strategy quantification $\estrat \varphi$
computes the optimal value a choice of strategy for variable $x$ can
give to formula $\varphi$. Dually, $\astrat \varphi$ computes the
minimal value a choice of strategy for variable $x$ can give to
formula $\varphi$. Binding $(a,x) \varphi$ just assigns strategy given
by $x$ to agent $a$. Temporal modality $\X\psi$ takes the value of
$\psi$ at the next step, while
$\psi_1 \U \psi_2$ maximises, over all positions along the play, the minimum 
between the value of $\psi_2$ at that position and the minimal value
of $\psi_1$ before this position. 

\bmcomment{please check} In a Boolean setting, we recover the standard semantics of
\SL. Also the fragment of \FSL with only temporal operators and
functions $\vee$ and $\neg$ corresponds to Fuzzy Linear-time Temporal Logic~\cite{lamine2000using,frigeri2014fuzzy}. 
Note that by equipping $\Func$ with adequate functions, we can capture
various classic fuzzy interpretations of boolean operators, such as
the Zadeh, G\"odel-Dummett or \L{}ukasiewicz interpretations
(see for instance~\cite{frigeri2014fuzzy}). However the interpretation
of the temporal operators is fixed in our logic.

\begin{remark}
  \label{rem-max}
  As we shall see, once we fix a finite set of possible satisfaction values for the
  atomic propositions in a formula $\varphi$, as is the case when a
  model is chosen, the set of possible satisfaction values for subformulas of
  $\varphi$ becomes finite. Therefore, the infima and suprema in the
  above definition are in fact minima and maxima.
\end{remark}

For a state formula $\varphi$ and a weighted game structure $\calG$,
we write $\semSLb{\varphi}$ for $\semSL{\emptyset}{\pos_\init}{\varphi}$.



\subsection{Model checking}
\label{sec-def-mc}


The problem we are interested in is the following generalisation of
the model checking problem, which is solved in~\cite{ABK16}
for \LTLf and \CTLsf.

\begin{definition}[Model-checking problem]
  \label{def-mc-sl}
  Given an \SLf state formula $\varphi$, a WCGS $\Game$ and
  a predicate $\pred\subseteq [0,1]$, decide whether $\semSLb{\varphi}\in\pred$.
\end{definition}

Note that $\pred$ should be finitely represented, typically
as a threshold or an interval. 

The precise complexity of the model-checking problem will be stated in
terms of \emph{nesting depth} of formulas, which counts the maximal number
  of strategy quantifiers in a formula $\varphi$, and is written $\ndepth(\varphi)$.
We establish the following result in
Section~\ref{sec-mc-slf-atl}:
\begin{theorem}
  \label{theo-mc-slf}
  The model-checking problem for \SLf is decidable. \bmchanged{It is  \kEXPTIME[(k+1)]-complete for formulas of nesting depth at most $k$.}
\end{theorem}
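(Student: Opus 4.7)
The plan is to reduce the \FSL model-checking problem to a model-checking problem for a Boolean-quantified extension \BQCTLsf\xspace of \CTLsf, and then to solve the latter via an automata-theoretic construction. For the first step, the key observation is that, once the input WCGS has been unfolded into its regular execution tree, each strategy is simply a labelling of the tree by actions: introducing $\lceil\log|\Act|\rceil$ fresh Boolean atomic propositions to encode one action per node, a strategy quantifier $\estrat$ in an \FSL formula becomes a block of existential quantifications over these fresh propositions in \BQCTLsf, while a binding $(\ag,x)$ restricts the subsequent $\A$ modality to quantify only over paths that are consistent, at each node, with the action read off the propositions associated to $x$. This reduction is linear in the size of the formula and preserves satisfaction values exactly.

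For the second step, the standard bottom-up approach to \CTLs model-checking---replacing innermost state subformulas by fresh propositions---cannot be applied directly, since an externally quantified proposition may occur in several state subformulas which must all agree on its assignment. To circumvent this I would generalise the automata-theoretic approach of~\cite{KVW00}: given a \BQCTLsf state formula $\psi$ and a predicate $P\subseteq[0,1]$, I would build an alternating parity tree automaton $\mathcal{A}_{\psi,P}$ that accepts exactly the weighted trees whose value of $\psi$ lies in $P$. The base case reuses the translation of \CTLsf into alternating parity tree automata, made quantitative as in~\cite{ABK16} thanks to the finiteness noted in Remark~\ref{rem-max}. A propositional quantifier $\exists p\,\psi'$ is handled by first making the automaton for $\psi'$ nondeterministic (via Miyano--Hayashi-style simulation, paying an exponential blow-up) and then projecting its alphabet along $p$; universal quantification is dual. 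Model-checking \FSL then amounts to running $\mathcal{A}_{\varphi,P}$ on the unfolding of the WCGS and checking non-emptiness.

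The main obstacle is to pin down the complexity in terms of the nesting depth~$k$. Each of the $k$ blocks of propositional quantifiers (one per nested strategy quantifier) triggers one alternation-removal followed by a projection, each costing one exponential; these blow-ups compound to a $k$-fold exponential alternating parity tree automaton, whose emptiness can be decided in one further exponential, giving the \kEXPTIME[(k+1)] upper bound. The delicate point is to make sure that the quantitative \CTLsf parts in between two quantifier blocks do not add further exponentials, by absorbing them into the nondeterminisation step of the next block. For the matching lower bound, I would encode the satisfaction problem for $k$-alternating quantified Boolean formulas with exponentially-long matrices---a canonical \kEXPTIME[(k+1)]-hard problem---into \FSL model-checking: the $k$ nested strategy quantifiers of \FSL play the quantifier game of the \QBF instance over Boolean-valued atomic propositions, while a small WCGS together with the temporal modalities uses the exponentially-many histories to navigate the matrix and evaluate it.
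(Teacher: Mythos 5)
Your overall architecture coincides with the paper's: reduce \FSL to \BQCTLsf by encoding strategies as quantified Boolean labellings of the unfolding of the game structure, then build an alternating parity tree automaton for each \BQCTLsf state formula and predicate, paying one exponential per propositional quantifier. However, there is a genuine gap in your treatment of the propositional quantifier, which is precisely the technically delicate point. In the quantitative semantics, $\exists p.\,\varphi'$ does not assert the existence of a valuation; its value is the \emph{supremum}, over all Boolean $p$-valuations, of the value of $\varphi'$. Your recipe --- nondeterminise the automaton for $\varphi'$ and project along $p$ --- yields an automaton for ``some valuation gives a value in $P$'', which coincides with ``the supremum of the values lies in $P$'' only when $P$ is upward closed. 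The induction cannot be confined to upward-closed predicates: the function case $f(\varphi_1,\dots,\varphi_n)$ and the nesting of quantifiers force you to decide membership in singletons $\{v\}$ and in intervals $[0,v]$. The paper's construction therefore first computes (Lemma~\ref{lem-exp-values}) a finite, exponentially-bounded over-approximation of the set of values $\varphi'$ can take, and for each candidate $v\in P$ it conjoins an \emph{existentially} projected nondeterministic automaton checking that some valuation attains exactly $v$ with a \emph{universally} projected universal automaton checking that all valuations stay in $[0,v]$, finally taking the disjunction over all candidates $v$. Both a nondeterminisation and its dual universal transformation are thus needed at every quantifier, not one or the other; without this two-sided check your automaton does not compute the supremum, and the induction breaks at the first function application or nested quantifier.

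A second, smaller divergence concerns the lower bound: you propose a direct encoding of $k$-alternating QBF with exponential matrices, whereas the paper observes that \EQkCTLs (prenex \QCTLs with at most $k$ quantifier alternations) embeds into \BQCTLsf with nesting depth $k$, and into \FSL via an adaptation of the known reduction from \QCTLs to \ATL with strategy context, so that the \kEXPTIME[(k+1)]-hardness of \EQkCTLs~\cite{DBLP:journals/corr/LaroussinieM14} transfers for free; your route would essentially reprove that hardness result from scratch. Finally, when you claim that the reduction to \BQCTLsf ``preserves satisfaction values exactly'', note that this relies on the guard formulas being Boolean-valued, so that ill-formed valuations contribute $0$ to the supremum introduced by the propositional quantifiers and non-outcome branches contribute $1$ to the infimum introduced by $\A$; this is the one point where lifting the qualitative reduction to the quantitative setting requires an argument, and it deserves to be made explicit.
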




\subsection{What can \FSL express?}
\label{ssec-expressiveness}

\FSL naturally embeds \SL. Indeed, if the values of the atomic
propositions are in $\{0,1\}$ and the only allowed functions in $\Func$
are $\vee, \wedge$, and $\neg$, then the satisfaction value of the
formula is in $\{0,1\}$ and coincides with the value of the
corresponding \SL formula. Below we~illustrate how quantities enable
the specification of rich strategic properties.
  
  
  \paragraph*{Drone battle}

A ``carrier'' drone $c$ helped
by a ``guard'' drone $g$ try to bring an artefact to a rescue point 
and keep it away from the ``villain'' adversarial drone $v$.  
They evolve in a three
dimensional cube of side length $1$ unit, \bmchanged{in which
  coordinates are triples $\coor=(\coora,\coorb,\coorc)\in [0,1]^3$}.  We  use 
the triples of atomic propositions $p_\coor=(p_\coora, p_\coorb, p_\coorc)$  and $q_\coor=(q_\coora, q_\coorb, q_\coorc)$ to denote
the coordinates of $c$ and $v$, respectively.  
Write $\dist\colon [0,1]^3\times[0,1]^3\to [0,1]$ for the (normalized)
distance between two points in the cube.
%
%
  Let
        the atomic proposition $\safe$ denote that the artefact has
        reached the rescue point.  In \FSL, \bmchanged{we can express the level
        of safety for the artefact defined as the minimum
        distance between the carrier and the villain along a
        trajectory to
        reach the rescue point.}  Indeed, the formula
	\[
	\varphi_{\text{rescue}} = \estrat[x] \estrat [y]
        (c, x)(g, y)
        \;\A (\dist(p_\coor,q_\coor) \U \; \safe)
	\]
	states that the carrier and guard drones cooperate to keep
        the villain as far as possible from the artefact, until it is rescued.  Note that the satisfaction value of the \FLTL
        specification is~$0$ if there is a path in which the artefact is never rescued. 

        \bmchanged{The strategies of the carrier and the guard being
          quantified before that of the villain implies that they are
          unaware of the villain's future moves. Now assume the
          guard is a double agent to whom the villain communicates
          his plan.}
	\bmchanged{Then his strategy can depend on the villain's strategy, which is
        captured by the following formula:} 
	\[
	\varphi_{\text{spy}} = \estrat[x] \astrat[z] \estrat [y] (c, x)(g, y)(v, z)\;\A(\dist(p_\coor,q_\coor) \U \; \safe)\text{}
	\]
	
	Note that the formula $\varphi_{\text{rescue}}$ can be written in \FATL
	\gpcomment{Should we introduce \FATL or give just an
	intuition?}\bmcomment{We don't have room to introduce it formally}, whereas $\varphi_{\text{spy}}$ requires \FSL.
In~fact
        $\varphi_{\text{spy}}$ actually belongs to the fragment
        \FSLOneG, which we study in Section~\ref{sec-mc-sloneg}.


\paragraph*{Synthesis with quantitative objectives}
	The problem of synthesis for \LTL specifications dates back to~\cite{PR89}.
	The setting is simple: two agents, a controller and an environment, operate on two disjoint sets of variables in the system.
	The controller wants a given \LTL specification $\psi$ to be satisfied in the infinite execution, while the environment wants to prevent it.
	The problem consists into synthesising a strategy for the controller such that, no matter the behaviour of the environment, the resulting execution satisfies $\psi$.
	Recently, this problem has been addressed in the context of \FLTL, where the controller aims at maximising the value of an \FLTL formula $\varphi$, while the environment acts as minimiser.
	Both problems can be easily represented in \SL and \FSL respectively, with the formula	
	\[
	\varphi_{\synt} = \estrat \astrat[y] \; (c, x) (e, y) \;\A \psi
	\]
	
	where $c$ and $e$ are the controller and environment agent, respectively, and $\psi$ the temporal specification expressed in either \LTL or \FLTL.
	
	Assume now that controller and environment are both composed of
        more than one agent, namely $c_1, \ldots, c_n$ and
        $e_1, \ldots, e_n$, and each controller component has the power
        to adjust its strategic choice based on the strategies
        selected by the environmental agents of lower rank.  That is,
        the strategy selected by agent $c_k$ depends, on the
        strategies selected by agents $e_j$, for every $j < k$.  We
        can write a \FSL formula to represent this generalised
        synthesis problem as follows:
	\[
	\varphi_{\synt} = \estrat[x_1]  \astrat[y_1] \cdots \estrat[x_n]  \astrat[y_n] \; (c_1, x_1) (e_1, y_1) \ldots (c_n, x_n) (e_n, y_n)\;\A \psi.
	\]
	
	Notice that every controller agent is bound to an existentially quantified variable, that makes it to maximise the satisfaction value of the formula in its scope.
	On the other hand, the environmental agents are bound to a universally quantified variable, that makes them to minimise the satisfaction value.
	
Notice that in general each alternation between
        existential and universal quantification yields an additional
        exponential in the complexity of the model-checking problem,
        as we show in  the overall quantification alternates
        from existential to universal $2n-1$ times, which would induce
        a .  In
        section~\ref{sec-mc-sloneg}, we show that, for the special
        case of these formulas, such alternation does not affect the
        computational complexity of the model-checking problem.
  
\paragraph*{NE in weighted games}
  
An important feature of \SL in terms of expressiveness is that it captures
Nash equilibria (NE, for short) and other common solution
concepts. This extends to~\FSL, but in a much stronger sense: first,
objectives in \FSL are quantitative, so that \emph{profitable
  deviation} is not a simple Boolean statement; second, the semantics
of the logic is quantitative, so that \emph{being a NE} is a
quantitative property, and we can actually express \emph{how far} a
strategy profile is from being a NE.

Consider a strategy profile $(x_i)_{\ag_i\in\Agt}$.  Assuming all
agents follow their strategies in that profile, a NE can
be characterised by the fact that all agents play one of
their best responses against their opponents' strategies. We~would
then write
\[
\varphi_{\NE}  ((x_i)_{\ag_i\in\Agt}) =
(\ag_1,x_1)\ldots (\ag_n,x_n)
\bigwedge_{\ag_i\in\Agt} \estrat[y_i]\; ((\ag_i,y_i)\;\A\varphi_i) \preceq \A\varphi_i
\]
where $\alpha\preceq\beta$ equals~$1$ if $\alpha\leq\beta$ and~zero
otherwise.
Strategy profile $(x_i)_{\ag_i\in\Agt}$ is a NE if, and only~if, $\varphi_{\NE} ((x_i)_{\ag_i\in\Agt})$
evaluates to~$1$. 

Adopting a more quantitative point of view, we can measure how much agent~$i$
can benefit from a selfish deviation using formula $\estrat[y_i] \diff((\ag_i,y_i)\varphi_i, \varphi_i)$, where $\diff(x,y) = \max\{0, x-y\}$.
The maximal benefit that some
agent may get is then captured by the following formula:
\[
\varphi_{\bar\NE} ((x_i)_{\ag_i\in\Agt}) =
\estrat[y]\;
(\ag_1,x_1)\ldots (\ag_n,x_n)
\bigvee_{\ag_i\in\Agt} \diff((\ag_i,y)\A\varphi_i, \A\varphi_i).
\]
Formula~$\varphi_{\bar\NE}$ can be used to characterise
$\varepsilon$-NE, by requiring that $\varphi_{\bar\NE}$
has value less than or equal to~$\varepsilon$; of~course it also
characterises classical NE as a special case.

\paragraph*{Secure equilibria in weighted games}

Secure equilibria~\cite{chatterjee2006games} are special kinds of NEs
in two-player games, where besides improving their objectives, the
agents also try to harm their opponent. Following the ideas above,
we~characterise secure equilibria in~\FSL as follows:
\[
\varphi_{\SE}(x_1,x_2) =
  (\ag_1,x_1)(\ag_2,x_2)
\bigwedge_{i\in\{1,2\}} \estrat[y] ((\ag_i,y)\A\varphi_1,(\ag_i,y)\A\varphi_2) \preceq_i
 (\A\varphi_1,\A\varphi_2)
\]
where $(\alpha_1,\alpha_2)\preceq_i(\beta_1,\beta_2)$ is~$1$ when
$(\alpha_i\leq \beta_i) \lor (\alpha_i=\beta_i \land \alpha_{3-i}\leq
\beta_{3-i})$, and~$0$ otherwise.

Secure equilibria have also been studied in $\mathbb Q$-weighted
games~\cite{BMR14}: in that setting, the objective of the
agents is to optimise e.g. the (limit) infimum or supremum of the
sequence of weights encountered along the play.
We~can characterise secure equilibria in such setting (after first
applying an affine transformation to have all weights in~$[0,1]$):
indeed, assuming that weights are encoded as the value of atomic
proposition~$w$, the value of formula~$\G w$ is the infimum of the
weights, while the value of $\F\G w$ is the limit infimum.  We~can
then characterise secure equilibria with (limit) infimum and supremum
objectives by using those formulas as the objectives for the agents
in formula~$\varphi_{\SE}$.

Other classical properties of games can be expressed, such as doomsday
equilibria (which generalise winning secure equilibria in $n$-player
games)~\cite{CDFR14}, robust NE (considering profitable
deviations of coalitions of agents)~\cite{Bre16}, or strategy
dominance and admissibility~\cite{Ber07,BRS17}, to~cite a~few.

\paragraph*{Rational synthesis}

Weak rational synthesis~\cite{FKL10,KPV16,AKP18} aims at synthesising a strategy profile for a controller~$C_0$ and the $n$ components~$(C_i)_{1\leq i\leq n}$
constituting the environment, in such a way that (1)~the~whole system
satisfies some objective~$\varphi_0$, and (2)~under the strategy of the
controller, the~strategies of the~$n$ components form an NE (or any given solution concept) for their own individual
objectives~$(\varphi_i)_{1\leq i\leq n}$. \gpcomment{Should we describe it in the quantitative setting instead?}

That a given strategy profile $(x_i)_{C_i\in\Agt}$ satisfies the two
conditions above can be expressed as follows:
\begin{align*}
  \varphi_{\textsf{wRS}} ((x_i)_{\bmchanged{0\le i\le n}})  &=
  (\bmchanged{C_0}, x_0)(\bmchanged{C_1}, x_1) \ldots (\bmchanged{C_n}, x_n)
  [\A\varphi_0 \wedge \varphi_{\NE}((x_i)_{1\leq i\leq n})]
\end{align*}

The formula returns the minimum  between the satisfaction value of
$\varphi_0$ and that of
$\varphi_{\NE}((x_i)_{1\leq i\leq n})$. 
Thus, the satisfaction value of $\varphi_{\textsf{wRS}}$ is~zero if the strategy
profile~$(x_i)_{1\leq i\leq n}$ is not a NE under
strategy~$x_0$ assigned to~$C_0$, and it returns the satisfaction value of~$\varphi_0$
under the whole strategy profile otherwise. Then the value of
\[
\estrat[x_0] \estrat[x_1] \ldots \estrat[x_{n}]\;
\varphi_{\textsf{wRS}} ((x_i)_{\bmchanged{0\le i \le n}}) 
\]
is the best value of~$\varphi_0$ that the system can collectively
achieve under the condition that the components in the environment are
in an NE.  
Obviously, we can go beyond NE and use any other solution
concept that can be expressed in~\FSL.


The counterpart of weak rational synthesis is strong rational synthesis, that aims at synthesising a strategy only for controller $C_0$ in such a way that the objective $\varphi_0$ is maximised over the worst NE that can be played by the environment component over the strategy of $C_0$ itself.

This can be expressed as follows:
\begin{align*}
\varphi_{\textsf{sRS}} (x_0)  &=
\astrat[x_1] \ldots \astrat[x_{n}]\ (C_0, x_0)(C_1, x_1) \ldots (C_n, x_n)
[\neg \varphi_{\NE}((x_i)_{1\leq i\leq n}) \vee \varphi_0]
\end{align*}

The formula in the scope of the quantifications and bindings returns
the maximum value between $\neg \varphi_{\NE}((x_i)_{1\leq i\leq n})$
and $\varphi_{0}$.  Given that the former is $1$ if there is no NE and
$0$ otherwise, the disjunction takes value $1$ over the path with no
NEs and the value of $\varphi_{0}$, otherwise.  Given that the
environment components have universally quantified strategies, the
formula $\varphi_{\textsf{sRS}}$ amounts at minimising such
disjunction.  Thus, the components will select (if any) the NE that
minimises the satisfaction value of $\varphi_{0}$.  Then the value of
\[
\estrat[x_0] \varphi_{\textsf{sRS}} (x_0) 
\]
is the best value of $\varphi_0$ that the controller can achieve under the condition that the components in the environment are playing the Nash Equilibrium that worsens it.




\paragraph*{Core equilibria}
\gpcomment{I suggest we remove this for the IJCAI submission}
In cooperative game theory, \emph{core equilibrium} is probably the
best known solution concept and sometimes related to the one of Nash
Equilibrium for noncooperative games.  Differently from NEs (but
similarly to Strong NEs) it accounts multilateral deviations (also
called coalition deviations) that, in order to be beneficial, must
improve the payoff of the deviating agents \emph{no matter} what is
the reaction of the opposite coalition.  More formally, for a given
strategy profile $(x_i)_{\ag_i \in \Agt}$, a coalition
$C \subseteq \Agt$ has a \emph{beneficial deviation}
$(y_i)_{\ag_i \in C}$ if, for all strategy profiles
$(z_i)_{\ag_i \in \Agt \setminus C}$ it holds that
$(x_i)_{\ag_i \in \Agt}\varphi_{i} \prec (y_i)_{\ag_i \in
  C}(z_i)_{\ag_i \in \Agt \setminus C}\varphi_{i}$, for every
$\ag_i \in C$.  We say that a strategy profile
$(x_i)_{\ag_i \in \Agt}$ is a \emph{core equilibrium} if, for every
coalition, there is no beneficial deviation.  This can be written in
\FSL as follows:

\begin{align*}
\varphi_{\textsf{core}} (x_i)_{\ag_i \in \Agt}  &=
\bigwedge_{C \subseteq \Agt}
\astrat[y_i]_{\ag_i \in C}
\estrat[y_i]_{\ag_j \in \Agt \setminus C}
(\bigwedge_{\ag_j \in C} (\ag_i, y_i)_{\ag_i \in \Agt} \varphi_j \preceq (\ag_i, x_i)_{\ag_i \in \Agt} \varphi_{j})
\end{align*}

The strategy profile $(x_i)_{\ag_i \in \Agt}$ is a core equilibrium
if, and only if, the formula $\varphi_{\textsf{core}}(x_i)_{\ag_i \in
  \Agt}$ evaluates to $1$.  The existence of a core equilibrium could
be than expressed with the formula $ \estrat[x_1] \ldots \estrat[x_n]
\varphi_{\textsf{core}} (x_i)_{\ag_i \in \Agt}$, which takes value $1$
if and only if there exists a core equilibrium.


\section{Booleanly Quantified \CTLsf}
\label{sec-qctlsf}
In this section we introduce Booleanly Quantified \CTLsf (\BQCTLsf,
for short) which extends \bmchanged{both \CTLsf and \QCTLs~\cite{DBLP:journals/corr/LaroussinieM14}.
On the one hand, it extends \CTLsf with second order quantification over
Boolean atomic propositions, on the other hand it extends \QCTLs to
the quantitative setting of \CTLsf}. While \BQCTLsf formulas are interpreted
over weighted Kripke structures, thus with atomic propositions having
values in $[0,1]$, the possible assignment for the quantified
propositions are Boolean.

\subsection{Syntax}
Let $\Func\subseteq \{f\colon [0,1]^m\to [0,1] \mid m \in \SetN \}$ be
a set of functions over $[0,1]$. 
\begin{definition}
The
syntax of \BQCTLsf is defined with respect to a finite set  \AP of atomic
propositions, using the following grammar:
\begin{align*}
  \varphi&\coloncolonequals p  \mid  \existsp[p] \varphi \mid \E \psi \mid f(\varphi,\ldots,\varphi)\\
  \psi&\coloncolonequals  \varphi \mid \X \psi \mid  \psi \U \psi \mid f(\psi,\ldots,\psi)
\end{align*}
where $p$ ranges over~$\AP$ and $f$ over $\Func$.  
\end{definition}
 
Formulas of type $\varphi$ are called \emph{state formulas}, those of
type $\psi$ are called \emph{path formulas}, and \BQCTLsf consists of
all the state formulas defined by the grammar.  An atomic proposition
which is not under the scope of a quantification is called
\emph{free}. If no atomic proposition is free in a formula $\varphi$,
then we say that $\varphi$ is \emph{closed}.
We again use
$\top$, $\vee$, and $\neg$ to denote functions $1$, $\max$ and $1 -
x$, as well as classic abbreviations already introduced for \SLf, plus
$\A\psi\colonequals \neg \E \neg \psi$.


\subsection{Semantics}

\BQCTLsf formulas
are evaluated on unfoldings of weighted Kripke structures.
Note that
the terminology Boolean only concerns the quantification of atomic
propositions (which is restricted to Boolean atomic propositions), and
that formulas are interpreted over weighted Kripke structures.

\begin{definition}
  \label{def-wks}
  A \emph{weighted Kripke structure} (WKS) is a tuple $\KS =
  \tuple{\AP,\setstates,\state_\init,\relation,\lab}$ where $\AP$ is a
  finite set of atomic propositions, $\setstates$
  is a finite set of states, $\state_\init \in \setstates$ is an
  initial state, $\relation\subseteq\setstates\times\setstates$ is a
  left-total transition relation\footnote{\textit{i.e.}, for all
    $\state\in\setstates$, there exists $\state'$ such that
    $(\state,\state')\in\relation$.}, and $\lab\colon \setstates \to
  [0,1]^{\AP}$ is a weight function.
\end{definition}
A \emph{path} in $\KS$ is an infinite word
$\spath=\spath_0\spath_1\ldots$ over $S$ such that
$\spath_0=\state_\init$ and $(\spath_i,\spath_{i+1})\in \relation$ for
all $i$.  
By analogy with concurrent game structures we call finite prefixes of
paths \emph{histories}, and write $\Hist_\KS$ for the set of all histories in $\KS$. 
We also let $\setval{\KS}=\{\lab(\state)(p)\mid \state\in\setstates \mbox{
  and }p\in\AP\}$ be the finite set of values appearing in $\KS$.

\subparagraph*{Trees}
Given finite sets $D$ of directions, $\AP$ of atomic propositions, and
$V \subseteq [0,1]$ of possible values, an $(\AP,V)$-labelled $D$-tree,
(or tree for short when the parameters are understood or irrelevant),
is a pair $\ltree=(\tree,\lab)$ where $\tree\subseteq D^+$ is closed
under non-empty prefixes, all \emph{nodes} $\node\in\tree$ start with
the same direction $\root$, called the \emph{root}, and have at least
one \emph{child} $\node\cdot d\in \tree$, and $\lab:\tree\to V^{\AP}$
is a \emph{weight function}.  
A
\emph{branch} $\tpath=\node_0\node_1\ldots$ is an infinite sequence of
nodes such that for all $i \geq 0$, we have that $\node_{i+1}$ is a
child of $\node_i$. We let $\Br{\node}$ be the set of branches that
start in node $\node$.  Given a tree $\ltree=(\tree,\lab)$ and a node
$\node \in \tree$, we define the \emph{subtree of $\ltree$ rooted in
  $\node$} as the tree $\ltree_\node = (\tree_\node,\lab')$ where
$\tree_\node = \{v \in \setstates^+: u \pref v\}$ ($\pref$ denotes the
non-strict prefix relation) and $\lab'$ is $\lab$ restricted to
$\tree_\node$.
We say that a tree $\ltree=(\tree,\lab)$ is \emph{Boolean in $p$},
written  $\boolp{\ltree}{p}$, if
for all $\node\in\tree$ we have $\lab(\node)(p)\in\{0,1\}$.
As with weighted Kripke structures, we let
$\setval{\ltree}=\{\lab(\node)(p)\mid \node\in\tree \mbox{ and
}p\in\AP\}$.

\smallskip
Given two $(\AP,V)$-labelled $D$-trees $\ltree,\ltree'$ and $p \in
\AP$, we write $\ltree \Pequiv[p] \ltree'$ if $t$ and $t'$ differ only in
assignments to $p$, which \bmchanged{must be Boolean in $\ltree'$}: formally, $t= (\tree,\lab)$,
$t'=(\tree,\lab')$, for the same domain $\tree$, 
\bmchanged{$\boolp{\ltree'}{p}$}, and for all $p' \in \AP$ such that $p' \neq p$
and all $\node \in \tree$, we have $\lab(\node)(p')=
\lab'(\node)(p')$.
\bmcomment{I removed the constraint $\boolp{\ltree}{p}$ as $p$ may be
  part of the model and have non-boolean value}


Finally, we define the \emph{tree unfolding of a weighted Kripke
  structure $\KS$} over atomic propositions $\AP$ and states
$\setstates$ 
  as the $(\AP,\setval{\KS})$-labelled $\setstates$-tree $\unfold{} =
  (\Hist_\KS,\lab')$, where $\lab'(\node) = \lab(\last(\node))$ for
  every $\node \in \Hist_\KS$. 

\begin{definition}[Semantics]
  \label{def-sem-qctl}
  Consider finite sets $D$ of directions, $\AP$ of atomic
  propositions, and $V \subseteq [0,1]$ of possible values. We fix an
  $(\AP,V)$-labelled $D$-tree $\tree$. Given a \BQCTLsf state formula
  $\varphi$ and a node $\node$ of $\tree$, we use
  $\sem{\node}{\varphi}$ to denote the satisfaction value of $\varphi$
  in node $\node$. Given a \BQCTLsf path formula $\psi$ and a branch
  $\tpath$ of $\tree$, we use $\sem{\tpath}{\psi}$ to denote the
  satisfaction value of $\psi$ along $\tpath$.  The satisfaction value
  is defined inductively as follows:
  \begin{align*}
    \sem{\node}{p} &= \lab(\node)(p)\\ 
    \sem{\node}{\existsp \varphi} &= \sup_{\ltree' \mid \ltree'\Pequiv[p] \ltree}
                                    \,\sem[\ltree']{\node}{\varphi}\\ 
    \sem{\node}{\E\psi} &= \sup_{\tpath\in\Br{\node}}
                          \,\sem{\tpath}{\psi}\\ 
    \sem{\node}{f(\varphi_1,\ldots,\varphi_n)} &= f(\sem{\node}{\varphi_1},\ldots,\sem{\node}{\varphi_n})\\ 
  \sem{\tpath}{\varphi} &= \sem{\tpath_0}{\varphi}\ \text{where}\
  \tpath_0\ \text{is the first node of}\ \tpath\\ 
  \sem{\tpath}{\X \psi} &= \sem{\tpath_{\geq 1}}{\psi} \\
  \sem{\tpath}{\psi_1 \U \psi_2} &= \sup_{i\geq
                                   0} \min(\sem{\tpath_{\geq
                                   i}}{\psi_2}, \min_{0\leq j <i}
                                   \sem{\tpath_{\geq j}}{\psi_1})\\
    \sem{\tpath}{f(\psi_1,\ldots,\psi_n)} &= f(\sem{\tpath}{\psi_1},\ldots,\sem{\tpath}{\psi_n})
\end{align*}  
\end{definition}
\pbcomment{should we redefine $\lambda_{\ge i}$ here?} \bmcomment{I
  don't think it is necessary}

\begin{remark}
  \label{rem-max-qctl}
  As with~\FSL, we will see that the suprema in the above definition
  can be replaced with maxima (see Lemma~\ref{lem-exp-values} below).
\end{remark}

\bmchanged{First, we point out that if $\Func=\{\top,\vee,\neg\}$, then \BQCTLsf evaluated on boolean Kripke structures
  corresponds to classic \QCTLs~\cite{DBLP:journals/corr/LaroussinieM14}.}
Note also that the  quantifier on propositions does not
range over arbitrary values in $[0,1]$. Instead, as in  \QCTLs,
it quantifies only on Boolean valuations.  It is still quantitative
though, in the sense that instead of merely stating the existence of a
valuation, $\existsp \varphi$ maximises the value of $\varphi$ over
all possible (Boolean) valuations of $p$.



For a tree $\ltree$ with root $\root$ we write $\semb{\varphi}$ for
$\sem{\root}{\varphi}$, and for a weighted Kripke structure $\KS$ we
 write $\semb[\KS]{\varphi}$ for $\semb[\unfold{}]{\varphi}$.
Note that this semantics is an extension of the tree semantics of
\QCTLs, in which the valuation of quantified atomic propositions is
chosen on the unfolding of the Kripke structure instead of the
states. This allows us to capture the semantics of Strategy Logic
based on strategies with perfect recall, where moves can depend on the
history, as apposed to the memoryless semantics, where strategies can
only depend on the current state
(see~\cite{DBLP:journals/corr/LaroussinieM14} for more detail).


As for \SLf, we are interested in the following model-checking problem:

\begin{definition}
  \label{def-mc-qctl}
  Given a \BQCTLsf state formula $\varphi$, a weighted Kripke structure $\KS$, and
  a predicate $\pred\subseteq [0,1]$, decide whether $\semb[\KS]{\varphi}\in\pred$.
\end{definition}

Similarly to \FSL, the precise complexity of the model-checking problem will be stated in
terms of \emph{nesting depth} of formulas, which counts the maximal number
  of nested propositional quantifiers in a formula $\varphi$, and is written $\ndepth(\varphi)$.

In the next section we establish our
main technical contribution, which is the following:

\begin{theorem}
  \label{theo-mc-qctlsf}
  The  quantitative model-checking problem for \BQCTLsf is decidable.
\bmchanged{It is  \kEXPTIME[(k+1)]-complete for formulas of nesting depth at most $k$.}
\end{theorem}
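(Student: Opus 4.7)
The plan is to prove both the upper bound (via an automata-theoretic construction) and the matching lower bound (via reduction from a known non-elementary problem).

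For the upper bound, I would proceed by induction on the structure of the formula, associating with each state subformula $\varphi$ and each value $v\in V$ (where $V\subseteq[0,1]$ is the finite set of possible satisfaction values, finite by a lemma analogous to Remark~\ref{rem-max}, to be stated as Lemma~\ref{lem-exp-values}) an alternating parity tree automaton $\aut_{\varphi,v}$ that accepts exactly the $(\AP,V)$-labelled trees in which $\varphi$ evaluates to~$v$. The base case for atomic propositions is trivial. For Boolean combinations $f(\varphi_1,\dots,\varphi_m)$, I would take a disjunction over all value tuples $(v_1,\dots,v_m)$ with $f(v_1,\dots,v_m)=v$ of the conjunctions $\bigwedge_i \aut_{\varphi_i,v_i}$, which only costs polynomially in the size of $V$. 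For path formulas of the form $\E\psi$, I would first inductively replace each maximal state subformula of $\psi$ by a fresh atomic proposition and build a one-way nondeterministic word automaton recognising, on value-labelled words, the satisfaction value of the resulting \FLTL formula using the construction of~\cite{ABK16}; this yields a nondeterministic parity tree automaton for $\E\psi$ after the standard tree-to-word lifting, and one conjoins the $\aut_{\varphi',v'}$ automata for the replaced subformulas through alternation.

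The crucial case is the propositional quantifier $\existsp\varphi$. Here one uses the standard projection trick: $\aut_{\existsp\varphi, v}$ simulates $\aut_{\varphi,v}$ but guesses the Boolean $p$-labelling along the tree. Because alternating tree automata do not project cleanly, I first need to \emph{nondeterminise} the automaton $\aut_{\varphi,v}$ via Safra/Muller-Schupp (or the Miyano--Hayashi construction where applicable), paying one exponential blow-up in the size of the automaton, and then perform the projection as a simple modification of the transition function. Dualising to handle the implicit universal quantifiers hidden inside the functions of $\Func$ (via negation in the Boolean-semantic fragment) is free for alternating automata. Iterating over $k$ nested propositional quantifiers then yields an alternating parity tree automaton of size $k$-fold exponential in the formula. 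Non-emptiness of such automata is solvable in time exponential in the automaton size, yielding the $(k{+}1)$-fold exponential upper bound; intersecting acceptance values with the predicate $\pred$ is a straightforward post-processing.

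For the lower bound, I would reduce from the model-checking problem for classical \QCTLs under the tree semantics with perfect-recall, which is known to be \kEXPTIME[(k+1)]-complete for formulas with $k$ nested propositional quantifiers~\cite{DBLP:journals/corr/LaroussinieM14}. Since \BQCTLsf with $\Func=\{\top,\vee,\neg\}$ evaluated on Boolean Kripke structures and with predicate $\pred=\{1\}$ coincides with \QCTLs (as noted in the paper just after Definition~\ref{def-sem-qctl}), the reduction is immediate and the hardness follows.

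The main obstacle will be the projection step: controlling the exponential cost of nondeterminisation in the quantitative setting, and making sure that alternation between existential and universal projections is correctly captured through the dualisation of alternating parity automata while staying within the claimed $(k{+}1)$-EXPTIME bound. Extra care is also needed to keep the set $V$ of values under control across the inductive construction, which is handled by observing that the set of values that can arise is bounded by a fixed polynomial in the atomic-proposition values of the model and the functions appearing in $\varphi$.
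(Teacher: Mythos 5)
There is a genuine gap in your treatment of the two maximising constructs, and it is exactly the point where the quantitative setting departs from classical \QCTLs. You handle $\existsp\varphi$ by nondeterminising $\mathcal{A}_{\varphi,v}$ and projecting existentially on $p$. That yields an automaton accepting the trees in which \emph{some} Boolean $p$-valuation gives $\varphi$ the value $v$ --- but the semantics of $\existsp\varphi$ is the \emph{maximum} over all $p$-valuations, so such a tree may satisfy $\semb{\existsp\varphi}=v'>v$ via another valuation, and your automaton would wrongly accept it for the value $v$. In the Boolean case existential projection suffices because ``some valuation gives $1$'' coincides with ``the max is $1$'', but your induction needs exact-value (singleton-predicate) automata at every level (they feed the function case and the $\E\psi$ case), so you cannot restrict to upward-closed predicates. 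The fix, which is what the paper does, is two-sided: for each candidate value $v$, conjoin an existentially $p$-projected nondeterministic automaton certifying that $v$ is \emph{attained} with a \emph{universally} $p$-projected universal automaton certifying that every valuation yields a value at most $v$, then disjoin over $v$. The same issue recurs in your $\E\psi$ case: a nondeterministic tree automaton guessing a branch of value $v$ only witnesses attainment, and must be conjoined with a universal automaton running a dualised word automaton for ``value at most $v$'' on all branches. Your remark that dualisation is free for alternating automata does not repair this: the universal half must be obtained as a genuinely universal (not alternating) automaton before it can be projected or run on all branches, and that conversion is where the one exponential per quantifier is actually paid.

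Two smaller points. First, the set of reachable satisfaction values is not polynomially bounded: composing functions multiplies the value sets, giving a bound of $|V|^{|\varphi|}$, i.e.\ exponential (this is harmless for the final complexity, but your accounting should use it, and your automaton sizes are consequently off by one exponential at the base level, since the $\E\psi$ case already costs one exponential before any quantifier is treated). Second, your lower bound is essentially the paper's: reduce from prenex \QCTLs with $k$ quantifier alternations, which is \kEXPTIME[(k+1)]-hard by Laroussinie--Markey and embeds into \BQCTLsf with nesting depth $k$ by coding alternation with the function $\neg$; that part is fine.
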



This result, together with a reduction from \SLf to \BQCTLsf that we
present in Section~\ref{sec-mc-slf-atl}, entails the decidability
of model checking \SLf announced in Theorem~\ref{theo-mc-slf}.


\section{Model checking \BQCTLsf}
\label{sec-mc-qctlsf}
We start by proving that, as has been the case for \LTLf, since the
set of possible satisfaction values of an atomic
proposition is finite, so is the set of satisfaction values of each \BQCTLsf formula.
This property allows to use $\max$
instead of $\sup$ in Definition~\ref{def-sem-qctl}.


\begin{lemma}
  \label{lem-exp-values}
  Let $\setval{}\subset [0,1]$ be a finite set of values with
  $\{0,1\}\subseteq \setval{}$, let $\varphi$ be a \BQCTLsf state
  formula and $\psi$ a \BQCTLsf path formula, with respect to
  $\AP$. Define
  \[
  \setval{\varphi}=\{\sem{\node}{\varphi}\mid \ltree\ \text{is a
    $(\AP,V)$-labelled tree}\ \text{and}\ \node\in\ltree\}
  \]
  be the set of values taken by $\varphi$ in nodes of
  $(\AP,V)$-labelled trees.  Similarly, define
  \[
  \setval{\psi}=\{\sem{\tpath}{\psi}\mid \ltree\ \text{is a
    $(\AP,V)$-labelled tree},\ \node\in\ltree\ \text{and}\
  \tpath\in\Br{\node}\}
  \] 
  Then, $|\setval{\varphi}|\le |\setval{}|^{|\varphi|}$ and
  $|\setval{\psi}|\le |\setval{}|^{|\psi|}$.  Moreover, one can
  compute 
  sets $\approxsetval{\varphi}$ and $\approxsetval{\psi}$ such that
  $\setval{\varphi} \subseteq \approxsetval{\varphi}$ and
  $\setval{\psi} \subseteq \approxsetval{\psi}$
  of size at most $|\setval{}|^{|\varphi|}$ and $|\setval{}|^{|\psi|}$, respectively.
\end{lemma}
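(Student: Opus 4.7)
The plan is to proceed by simultaneous structural induction on state and path formulas, building explicit over-approximations $\approxsetval{\cdot}$ and verifying both properties at once. I would define $\approxsetval{\cdot}$ recursively: $\approxsetval{p} = V$; $\approxsetval{\existsp\varphi} = \approxsetval{\varphi}$; $\approxsetval{\E\psi} = \approxsetval{\psi}$; $\approxsetval{\X\psi} = \approxsetval{\psi}$; $\approxsetval{f(\varphi_1,\ldots,\varphi_m)} = \{f(v_1,\ldots,v_m) \mid v_i \in \approxsetval{\varphi_i}\}$; $\approxsetval{\psi_1 \U \psi_2} = \approxsetval{\psi_1}\cup\approxsetval{\psi_2}$; and $\approxsetval{\varphi}$ is inherited unchanged when a state formula is viewed as a path formula. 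Each clause is explicitly constructible, which yields computability.

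For the cardinality bound $|\approxsetval{\varphi}|\le |V|^{|\varphi|}$, an induction on the formula structure suffices. The atomic case gives $|V| = |V|^{1}$. For $f$, the product of cardinalities bound $\prod_i |V|^{|\varphi_i|} = |V|^{\sum_i |\varphi_i|} \le |V|^{|\varphi|}$ holds since $|\varphi| \ge 1 + \sum_i |\varphi_i|$. For $\existsp$, $\E$ and $\X$ the bound is preserved because the immediate subformula is strictly smaller. For $\U$, I would use that $|V| \ge 2$ (since $\{0,1\}\subseteq V$) and $|\psi|\ge |\psi_1|+|\psi_2|+1$ to conclude $|V|^{|\psi_1|} + |V|^{|\psi_2|} \le |V|^{|\psi_1|+|\psi_2|} \le |V|^{|\psi|}$.

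The more delicate direction is the inclusion $\setval{\varphi}\subseteq\approxsetval{\varphi}$, and this is where the main (mild) obstacle sits: the semantic clauses for $\existsp$, $\E$ and $\U$ involve a supremum that is not \emph{a priori} a maximum, so one has to argue that it is attained and stays inside the inductively built over-approximation. For $\existsp\varphi$, the induction hypothesis says that each $\sem[\ltree']{\node}{\varphi}$ with $\ltree'\Pequiv[p]\ltree$ already lies in the finite set $\approxsetval{\varphi}$; finiteness turns the sup into a max, hence into an element of $\approxsetval{\varphi}=\approxsetval{\existsp\varphi}$. The case of $\E\psi$ is entirely analogous, with branches replacing $p$-variants. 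For $\psi_1\U\psi_2$, the inner expression $\min(\sem{\tpath_{\ge i}}{\psi_2},\min_{0\le j<i}\sem{\tpath_{\ge j}}{\psi_1})$ is a minimum of finitely many values drawn from $\approxsetval{\psi_1}\cup\approxsetval{\psi_2}$, and a minimum of finitely many reals coincides with one of them, so the inner value already belongs to that set; finiteness of $\approxsetval{\psi_1}\cup\approxsetval{\psi_2}$ then forces the outer supremum over positions to be attained and to remain in this set. The routine cases (atomic propositions, $f$, $\X$, and state formulas used as path formulas) follow from a direct unfolding of the semantics.

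Combining the two properties yields $|\setval{\varphi}| \le |\approxsetval{\varphi}| \le |V|^{|\varphi|}$ and similarly for path formulas, which proves the lemma and, as a byproduct, validates Remark~\ref{rem-max-qctl}: since the sup is attained at each inductive step, $\sup$ can indeed be replaced by $\max$ throughout Definition~\ref{def-sem-qctl}.
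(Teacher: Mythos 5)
Your proof is correct and follows essentially the same route as the paper's: a mutual structural induction in which finiteness of the inductively obtained value sets turns each supremum into a maximum, with the union bound (using $|V|\ge 2$) for $\U$ and the product bound for $f(\cdot)$. The only step worth making explicit in the $\existsp$ case is that a $p$-variant $\ltree'$ of an $(\AP,V)$-labelled tree is again $(\AP,V)$-labelled because $\{0,1\}\subseteq V$, which is what licenses applying the induction hypothesis to $\ltree'$; the paper states this observation explicitly.
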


\begin{proof}
  We prove the result by mutual induction on $\varphi$ and
  $\psi$.
 Clearly, 
 $\setval{p}=\setval{}$. 

For $\varphi=\existsp \varphi'$, 
observe that if $\setval{\ltree}\subseteq \setval{}$ and $u\in\ltree$, then for all
 trees $\ltree'$ such that $\ltree'\Pequiv[p] \ltree$  
 it is also the case that $\setval{\ltree'}\subseteq\setval{}$
 (by assumption $\setval{}$ contains 0 and 1). 
 It follows that  $\sem{\node}{\existsp \varphi'}$ is
 defined as the supremum of a subset of $\setval{\varphi'}$, 
 which by induction hypothesis is of size at most
 $|\setval{}|^{|\varphi'|}$, and thus the supremum is indeed a
 maximum. It follows that $\sem{\node}{\existsp
   \varphi'}\in\setval{\varphi'}$. Hence,
 $\setval{\existsp\varphi'}\subseteq\setval{\varphi'}$, and thus
 $|\setval{\existsp\varphi'}|\le |\setval{\varphi'}|\le
 |\setval{}|^{|\varphi'|}\le |\setval{}|^{|\existsp\varphi'|}$.

 For $\varphi=\E\psi$, again $\sem{\node}{\E\psi}$ is a supremum over
 a subset of $\setval{\psi}$, which by induction hypothesis is of size
 at most $\setval{}^{|\psi|}$. The supremum is thus reached, hence
 $\setval{\E\psi}\subseteq\setval{\psi}$
and  $|\setval{\E\psi}|\le |\setval{\psi}|\le
 |\setval{}|^{|\psi|}\le |\setval{}|^{|\E\psi|}$.

 For $\varphi=f(\varphi_1,\ldots,\varphi_n)$, we have
 $\setval{\varphi}=\{f(v_1,\ldots,v_n)\mid v_i\in
 \setval{\varphi_i}\}$,
 hence
 $|\setval{\varphi}|$ is at most
 $\prod_{i=1}^n|\setval{\varphi_i}|$. By induction hypothesis, we
get $|\setval{\varphi}|\le\prod_{i=1}^n|\setval{}|^{|\varphi_i|}\le
|\setval{}|^{|\varphi_1|+\ldots + |\varphi_n|}\le
|\setval{}|^{|\varphi|}$.

For $\psi=\varphi$, the result follows by  hypothesis of mutual
induction.

For $\psi=\X\psi'$, we have that $\setval{\psi}=\setval{\psi'}$, and the result
follows.

For $\psi=\psi_1\U\psi_2$, the value of $\psi$ is defined via
suprema and infima over possible values for $\psi_1$ and $\psi_2$,
which are finitely many by the induction hypothesis. The suprema and
infima are thus maxima and minima, and
$\setval{\psi}\subseteq\setval{\psi_1}\cup\setval{\psi_2}$. Hence,
$|\setval{\psi}|\le |\setval{\psi_1}|+|\setval{\psi_2}|\le
|\setval{}|^{|\psi_1|}+|\setval{}|^{|\psi_2|}\le
|\setval{}|^{|\psi_1|+|\psi_2|}\le |\setval{}|^{|\psi|}$ (since
$|\setval{}| \ge 2$).

In all cases, the claim for over-approximations follows by the same
reasoning as above.
\end{proof}


The finite over-approximation of the set of possible satisfaction
values induces a finite alphabet for the automata our model-checking
procedure uses.

In the following, we use \emph{alternating parity tree automata}
(APT in short), and their purely non-deterministic (resp. universal)
variants, denoted NPT (resp. UPT). Given two APT $\auto$ and $\auto'$
we denote $\auto\wedge\auto'$ (resp. $\auto\vee\auto'$) the APT of
 size linear in $|\auto|$ and $|\auto'|$ that accepts the
 intersection (resp. union) of the languages of $\auto$ and
 $\auto'$, and we call \emph{index} of an automaton the number of
 priorities in its parity condition. We refer the reader to
 \bmcomment{what to cite?} for a detailed exposition of alternating
 parity tree automata.

We extend the automata-based model-checking procedure for \CTLs from
\cite{KVW00}. Note that since the quantified atomic propositions may
appear in different subformulas, we cannot extend the algorithm for
\CTLsf from~\cite{ABK16}, as the latter applies the technique of~\cite{EL87}, where the evaluation of each subformula is independent.

\begin{proposition}
  \label{prop-mc-qctl}
  Let $\values\subset[0,1]$ be a finite set of values such that
  $\{0,1\} \subseteq \setval{}$, and let $D$ be a finite set of
  directions. For every $\BQCTLsf$ \bmcomment{no, we need to construct it for open
    formulas too} state formula $\varphi$ and predicate $\pred\subseteq
  [0,1]$, one can construct an APT $\automc{\pred}{\varphi}$ such that
  for every $(\AP,\values)$-labelled $D$-tree $\ltree$, 
  $\automc{P}{\varphi}$ accepts $\ltree$ if and only if
  $\semb{\varphi}\in\pred$.

  The APT $\automc{\pred}{\varphi}$ has at most
  $(\ndepth(\varphi)+1)$-exponentially many states, and its index is at most
  $\ndepth(\varphi)$-exponential. 
\end{proposition}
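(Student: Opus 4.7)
The plan is to construct $\automc{\pred}{\varphi}$ by structural induction on $\varphi$, with a mutual construction for path subformulas~$\psi$. Lemma~\ref{lem-exp-values} provides, at each step, a computable finite over-approximation $\approxsetval{\chi}$ of the values that a subformula~$\chi$ may take, so every quantification becomes a finite case analysis. The base case $\varphi=p$ is trivial. For $\varphi=f(\varphi_1,\dots,\varphi_m)$, I~build $\automc{\pred}{\varphi}$ as the alternating union, over tuples $(v_1,\dots,v_m)\in\prod_i\approxsetval{\varphi_i}$ with $f(v_1,\dots,v_m)\in\pred$, of the alternating intersection $\bigwedge_i \automc{\{v_i\}}{\varphi_i}$. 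For $\varphi=\E\psi$, I~would mimic the classical~\cite{KVW00} construction as lifted to a quantitative setting by~\cite{ABK16}: replace each maximal proper state subformula~$\chi_j$ of~$\psi$ by a fresh proposition valued in~$\approxsetval{\chi_j}$, compile the resulting path formula into a non-deterministic parity word automaton over the product alphabet recognising sequences whose $\psi$-value lies in~$\pred$, lift it to a tree automaton that existentially selects a branch, and at each visited node universally launches $\automc{\{v_j\}}{\chi_j}$ to certify the guessed value~$v_j$ of~$\chi_j$.

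The key case is $\varphi=\existsp\varphi'$. Since $p$ may occur in several subformulas of~$\varphi'$ at once, we cannot decouple their evaluations as in~\cite{EL87,ABK16}; instead we reason about propositional quantification directly on the APT, as in~\QCTLs. By Remark~\ref{rem-max-qctl} the supremum in the semantics of $\existsp$ is a maximum over the finite set~$\approxsetval{\varphi'}$, so
\[
\sem{\root}{\existsp\varphi'}\in\pred \iff \exists v\in\pred\cap\approxsetval{\varphi'}:\ \bigl(\exists \ltree'\Pequiv[p]\ltree,\ \sem[\ltree']{\root}{\varphi'}=v\bigr)\,\wedge\,\bigl(\forall \ltree'\Pequiv[p]\ltree,\ \sem[\ltree']{\root}{\varphi'}\le v\bigr).
\]
For each such $v$, I~would (i)~build by induction the APTs $\automc{\{v\}}{\varphi'}$ and $\automc{(v,1]}{\varphi'}$; (ii)~simulate each by an equivalent NPT via the standard Muller--Schupp-style construction, paying one exponential in states and a polynomial blow-up in the index; (iii)~project each NPT on~$p$, letting the transition function nondeterministically guess a Boolean value for~$p$ in place of reading it, so that the projected NPT accepts exactly those trees admitting some Boolean $p$-relabelling on which the original APT accepts; (iv)~dualise (swap alternation modes and shift priorities by one) the projection of $\automc{(v,1]}{\varphi'}$ to obtain an APT checking the universal condition ``no Boolean $p$-relabelling pushes the value above~$v$''. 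Intersecting that with the projection of $\automc{\{v\}}{\varphi'}$ and taking the disjunction over~$v\in\pred\cap\approxsetval{\varphi'}$ yields $\automc{\pred}{\existsp\varphi'}$.

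Summing the costs gives the announced bounds: each propositional quantifier contributes exactly one additional exponential to the state count (the Muller--Schupp step) and a polynomial increase in the index at that nesting level, so $k$ nested quantifiers produce an automaton with $(k{+}1)$-exponentially many states and $k$-exponential index. The main obstacle, to my mind, is case~(iv): one must verify that dualising the projected NPT genuinely captures universal quantification over \emph{Boolean} $p$-relabellings (rather than over all $[0,1]$-labellings), and that the combined automaton really tests ``max equals $v$'' rather than ``some labelling attains value~$v$''---a distinction that would disappear only under the restriction to up-closed predicates~$\pred$, so that handling an arbitrary~$\pred$ genuinely requires the two nested projections and the subsequent complementation.
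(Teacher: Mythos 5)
Your overall strategy is the paper's: structural induction, with Lemma~\ref{lem-exp-values} turning every supremum into a finite case analysis, and for $\existsp\varphi'$ a disjunction over candidate maxima $v\in\pred\cap\approxsetval{\varphi'}$ of ``some Boolean $p$-relabelling attains $v$'' (existential projection of an NPT) conjoined with ``none exceeds $v$''. Your variant of the second conjunct---existentially project an NPT for $\automc{(v,1]}{\varphi'}$ and then dualise---is equivalent to the paper's direct universal projection of a UPT for $[0,v]$ and costs the same single exponential; the worry you raise in step~(iv) is unfounded precisely because $\Pequiv[p]$ only ranges over trees that are Boolean in~$p$, so complementing ``some Boolean relabelling lands in $(v,1]$'' does yield ``all Boolean relabellings give value at most $v$''. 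The function case and the complexity accounting also match the paper.

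The genuine gap is in your $\E\psi$ case. You build a tree automaton that existentially selects a branch whose $\psi$-value lies in~$\pred$; but $\sem{\node}{\E\psi}$ is the \emph{supremum} over branches, and membership of that supremum in $\pred$ is not equivalent to the existence of one branch with value in $\pred$ (take $\pred=\{1/2\}$ and a tree with one branch of value $1/2$ and another of value $3/4$). Exactly the ``max equals $v$ versus some witness attains $v$'' distinction you correctly flag for $\existsp$ arises here too, and the repair has the same shape as in the paper: $\bigvee_{v\in\approxsetval{\E\psi}\cap\pred}\Nauto^{E=v}\wedge \Uauto^{A\le v}$, where $\Nauto^{E=v}$ guesses a branch of value exactly $v$ and $\Uauto^{A\le v}$ checks that every branch has value at most~$v$. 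The universal conjunct forces a second point you have not addressed: a nondeterministic word automaton cannot simply be ``launched on every branch'' of a tree, so one must first dualise the nondeterministic word automaton for $(v,1]$ from~\cite{ABK16} into a universal one (or pay an extra exponential for determinisation) before lifting it to the tree level. With those two repairs your construction coincides with the paper's.
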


\begin{proof}
  The proof proceeds by an induction on the structure of the formula
  $\varphi$ and strengthens the induction statement as follows: one
  can construct an APT $\automc{\pred}{\varphi}$ such that for every
  $(\AP,\values)$-labelled $D$-tree $\ltree$, for every node $\node
  \in\ltree$, we have that $\automc{P}{\varphi}$ accepts $\ltree$ from
  node $\node$ if and only if $\sem{\node}{\varphi} \in \pred$.

  If $\varphi=p$, the automaton has one state and accepts a tree
  $\ltree=(\tree,\lab)$ in node $\node\in\tree$ if
  $\lab(\node)(p)\in\pred$, and rejects otherwise.  In addition,
  $\setval{p}=\setval{}$.

  If $\varphi=\existsp \varphi'$, we want to check whether the maximal
satisfaction  value of $\varphi'$ for all possible Boolean valuations of $p$ is in
  $P$. To do so we first compute a finite set
  $\approxsetval{\varphi'}$ of exponential size such that
  $\setval{\varphi'}\subseteq \approxsetval{\varphi'}$, which we can
  do as established by Lemma~\ref{lem-exp-values}. For each possible
  value $v\in \approxsetval{\varphi'}\cap P$, we check whether this
  value is reached for some $p$-valuation, and if the value of
  $\varphi'$ is less than or equal to $v$ for all $p$-valuations.  For
  each $v\in\approxsetval{\varphi'}\cap P$, inductively build the APTs
  $\automc{\{v\}}{\varphi'}$ and $\automc{[0,v]\cap P}{\varphi'}$.
  Turn the first one into a NPT~$\Nauto_{=v}$
  and the second one into a UPT~$\Uauto_{\le v}$. Project $\Nauto_{=v}$
  existentially on~$p$, and call the result $\Nauto'_{=v}$. Project
  $\Uauto_{\le v}$ universally on $p$, call the result $\Uauto'_{\le
    v}$. Finally, we can define the APT $\automc{P}{\existsp
    \varphi'}\egdef \bigvee_{v\in \approxsetval{\varphi'}\cap P}
  \Nauto'_{=v}\wedge \Uauto'_{\le v}$. It~is then easy to see that
  this automaton accepts a tree if and only if there exists a value in
  $P$ that is the maximum of the possible values taken by $\varphi'$
  for all $p$-valuations.

  If $\varphi=\E \psi$: as in the classic automata construction for
  \CTLs~\cite{DBLP:journals/jacm/KupfermanVW00}, we first let
  $\atoms(\psi)$ be the set of maximal state sub-formulas of $\psi$
  (that we call \emph{atoms} thereafter~--~which have to be
  distinguished from atomic propositions of the formula). In a first
  step we see elements of $\atoms(\psi)$ as atomic propositions, and
  $\psi$ as an \LTLf formula over $\atoms(\psi)$. According to
  Lemma~\ref{lem-exp-values} we can compute over-approximations
  $\approxsetval{\varphi'}$ for each $\varphi'\in\atoms(\psi)$, and we
  thus let
  $\approxsetval{}=\bigcup_{\varphi'\in\atoms(\psi)}\approxsetval{\varphi'}$
  be a finite over-approximation of the set of possible values for
  atoms.
  It is proven in~\cite{ABK16} that for every $\pred\subseteq [0,1]$,
  one can build a nondeterministic \bmchanged{parity} automaton 
  $\wauto_\pred^{\psi}$ \bmchanged{of size exponential in $|\psi|^2$} such that $\wauto_\pred^{\psi}$ accepts a word
  $w\in (\approxsetval{}^{\atoms(\psi)})^\omega$ if and only if
  $\semc{w}{\psi}\in \pred$. Now let us compute $\approxsetval{\E\psi}$
  (again using Lemma~\ref{lem-exp-values}), and for each
  $v\in\approxsetval{\E\psi}\cap \pred$, construct an NPT
  $\Nauto^{E=v}$ that guesses a branch in its input and simulates
  $\wauto_{\{v\}}^\psi$ on it.
\bmchanged{To obtain a universal word automaton of single exponential
  size that checks whether $\semc{w}{\psi}\in [0,v]$, first build the
  nondeterministic automaton $\wauto^\psi_{(v,1]}$ from~\cite{ABK16},
and  dualize it in linear time. From the resulting universal automaton $\wauto_{[0,v]}^{\psi}$
we build a UPT $\Uauto^{A\le v}$}
  that executes $\wauto_{[0,v]}^{\psi}$ on all branches of its
  input.\footnote{We take $\wauto_{[0,v]}^{\psi}$ universal because it
    is not possible to simulate a nondeterministic word automaton on
    all branches of a tree, but it is possible for a universal
    one. \bmcomment{cite something?} 
    Note that we could also determinise $\wauto_{[0,v]}^{\psi}$, but
    it would cost one more exponential.} We now define
  the APT $\auto^\pred$ on $\approxsetval{}^{\atoms(\psi)}$-trees as
  \[\auto^\pred= \bigvee_{v\in \approxsetval{\E\psi}\cap
    \pred}\Nauto^{E=v}\wedge \Uauto^{A\le v}.\]

  Now to go from atoms to standard atomic propositions,
  we define an APT $\automc{P}{\E
    \psi}$ that simulates $\auto^\pred$ 
  by, in each state and
  each node of its input, guessing a value $v_i$ in
  $\approxsetval{\varphi_i}$ 
  for each formula $\varphi_i\in\atoms(\psi)$, simulating
  ${\auto^\pred}$ on the resulting label, and launching a copy
  of $\auto_{\varphi_i}^{V,\{v_i\}}$ for each
  $\varphi_i\in\atoms(\psi)$. \bmchanged{Note that the automaton is alternating
  and thus may have to guess several times the satisfaction value of a
 formula $\varphi_i$ in a same node, but launching the
 $\auto_{\varphi_i}^{V,\{v_i\}}$ forces it to always guess the same,
 correct value.}




  Finally, if $\varphi=f(\varphi_1,...,\varphi_n)$,  \bmchanged{we list
  all combinations $(v_1,\ldots,v_n)$ of the possible satisfaction values for the
  subformulas $\varphi_i$ such that $f(v_1,\ldots,v_n)\in\pred$, and
we  build  automaton~$\automc{\pred}{\varphi}$ as the disjunction over
such $(v_1,\ldots,v_n)$ of the conjunction of
  automata $\automc{v_i}{\varphi_i}$.}

  \smallskip
  The complexity of this procedure is non-elementary. More precisely,
  we claim that $\auto_{\varphi}^{V,P}$ has size at most
  $(\ndepth(\varphi)+1)$-exponential and index (i.e., number of
  priorities for the parity condition) at most $\ndepth(\varphi)$-exponential.

  The case where~$\varphi$ is an atomic proposition is trivial.

  For~$\varphi=\existsp \varphi'$, we~transform an
  exponential number of APTs into NPTs or UPTs. This entails an
  exponential blowup in the size and index of each
  automaton. The~resulting automaton $\automc{P}{\existsp \varphi'}$
  has  at most $(\ndepth(\varphi')+2)$-exponentially many states and index at most
  $(\ndepth(\varphi')+1)$-exponential. Since $\ndepth(\varphi)=\ndepth(\varphi')+1$,
  the inductive property is preserved.

  If $\varphi$ 
  has the form $f(\varphi_1,...,\varphi_n)$, then the automaton
  for~$\varphi$ is a combination of the automata for all~$\varphi_i$,
  and for the various values those subformulas may take. By
  Lemma~\ref{lem-exp-values} there are at
  most $|\values|^{|\varphi_1|+\ldots+|\varphi_n|}\le |\values|^{|\varphi|}$
  different combinations, so assuming (from the
  induction hypothesis) that the automata for~$\varphi_i$ have 
  at most $(\ndepth(\varphi_i)+1)$-exponentially many states and index at most
  $h(\varphi_i)$-exponential, the automaton for~$\varphi$ has 
  at most $(\ndepth(\varphi)+1)$-exponentially many states and index at most
  $\ndepth(\varphi)$-exponential (note that $\ndepth(\varphi_i)=\ndepth(\varphi)$).

Finally for $\varphi=\E\psi$: following~\cite{ABK16}, the size of
  $\wauto_\pred^{\psi}$ is exponential in~$|\psi|^2$, and 
\bmchanged{at most $|\psi|$ B\"uchi acceptance conditions. One can turn this automaton into an equivalent
  B\"uchi automaton still  exponential in $|\psi|^2$, which can
  be seen as a parity automaton with index $2$.} Then
  $\Nauto^{E=v}$ and~$\Uauto^{A\le v}$ both also have sizes
  exponential in~$|\psi|^2$, and index  $2$.
  Finally, $\auto^\pred$, which combines an exponential number of the
  automata above, has size exponential in~$|\psi|^2$ and
  index~$2$. 
  The~final
  automaton $\automc{P}{\E \psi}$ is obtained from that automaton by
  plugging the automata for $\auto_{\varphi_i}^{V,\bullet}$, whose
  sizes and indices are dominating the size and index
  of~$\auto^\pred$. It follows that, for~$\varphi=\E\psi$, the
  size of~$\automc{P}{\varphi}$ also is $(\ndepth(\varphi)+1)$-exponential, and
  its index is $\ndepth(\varphi)$-exponential.
  \end{proof}




To see that Theorem~\ref{theo-mc-qctlsf} follows from
Proposition~\ref{prop-mc-qctl}, recall that by
definition $\semb[\KS]{\varphi}=\semb[\unfold{}]{\varphi}$. Thus to
check whether $\semb[\KS]{\varphi}\in\pred$, where atoms in $\KS$ takes values
in $\setval{}$, it is enough to build $\automc{\pred}{\varphi}$ as in
Proposition~\ref{prop-mc-qctl}, take its product with a deterministic
tree automaton that accepts only~$\unfold{}$, and check for emptiness
of the product automaton.
The formula complexity is $(\ndepth(\varphi)+1)$-exponential, but the
structure complexity is polynomial.

\bmchanged{For the lower bounds, consider the fragment \EQkCTLs of \QCTLs
 which  consists in formulas in
        prenex normal form, \ie with all
         quantifications on atomic propositions at the beginning,  with at most
        $k$ alternations between existential and universal
        quantifiers, counting the first quantifier as one alternation (see~\cite[p.8]{DBLP:journals/corr/LaroussinieM14} for
        a formal definition). Clearly, \EQkCTLs can be translated in
        \BQCTLsf with formulas of linear size and nesting depth at most
        $k$ (alternation is simply coded by placing function $\neg$
        between quantifiers).
        It is proved in~\cite{DBLP:journals/corr/LaroussinieM14} that model
        checking \EQkCTLs is \kEXPTIME[(k+1)]-hard.        
}



\section{Model checking quantitative strategic logics}
\label{sec-mc-slf-atl}


In this section we show how to reduce the model-checking problem for \FSL to
that of \QCTLsf. This reduction is a rather  straightforward adaptation
of the usual one for qualitative variants of \SL (see
\eg~\cite{DBLP:journals/iandc/LaroussinieM15,DBLP:conf/lics/BerthonMMRV17,DBLP:conf/csl/FijalkowMMR18}). 
We essentially observe that it can be lifted to the quantitative
setting.


We let $\Agt$ be a finite set of agents, and $\AP$ be a finite set of
atomic propositions.

\subparagraph*{Models transformation.}  We first define for every WCGS
$\Game = \tuple{\Act,\Pos,\pos_\init,\Delta,\labels}$ over $\Agt$ and
$\AP$ a WKS $\KS_{\Game} =
\tuple{\setstates,\state_\init,\relation,\lab}$ over some set $\AP'$
and a bijection $\fplay\mapsto\node_{\fplay}$ between the set of
histories starting in the initial state $\pos_\init$ of $\Game$ and
the set of nodes in $\unfold[\KS_{\Game}]{}$.  We consider
propositions $\AP_\Pos=\{p_{\pos}\mid\pos\in\Pos\}$, that we assume to
be disjoint from $\AP$.  We let $\AP' = \AP \cup \AP_\Pos$. Define the
Kripke structure
$\KS_{\Game}=(\setstates,\state_{\init},\relation,\lab)$ where
\begin{itemize}
\item $\setstates=\{\state_{\pos} \mid \pos\in\Pos\}$, 
\item $\state_{\init}=\state_{\pos_{\init}}$,
\item $\relation=\{(\state_{\pos},\state_{\pos'}) \in \setstates^2\mid
\exists\jact\in\Act^{\Agt} \mbox{ s.t. }\Delta(\pos,\jact)=\pos'\}$, and
\item $\lab(p)(\state_{\pos})=
  \begin{cases}
    1 & \mbox{if }p\in \AP_\Pos \mbox{ and }p=p_\pos\\
    0 & \mbox{if }p\in \AP_\Pos \mbox{ and }p\neq p_\pos\\
    \labels(p)(\pos) & \mbox{otherwise}
  \end{cases}$.
\end{itemize}
For every history $\fplay=\pos_{0}\ldots\pos_{k}$, define the node
$\node_{\fplay}= \state_{\pos_{0}}\ldots \state_{\pos_{k}}$ in
$\unfold[\KS_{\Game}]{}$ (which exists, by definition of $\KS_{\Game}$
and of tree unfoldings). Note that the mapping
$\fplay\mapsto\node_{\fplay}$ defines a bijection between the set of
histories from $\pos_\init$ and the set of nodes in
$\unfold[\KS_{\Game}]{}$.

\subparagraph*{Formulas translation.}  Given a game
$\Game=\tuple{\Act,\Pos,\pos_\init,\Delta,\labels}$ and a formula
$\varphi\in\FSL$, we define a \QCTLsf formula $\varphi'$ such that
$\semSLb{\varphi}=\semb[\KS_\Game]{\varphi'}$. More precisely, this
translation is parameterised with a partial function $g:\Agt
\rightharpoonup \Var$ which records bindings of agents to strategy
variables.  Suppose that $\Act=\{\act_{1},\ldots,\act_{\maxmov}\}$.
We define the translation $\trs[g]{\cdot}$ 
by 
induction on 
state formulas $\varphi$ and path formulas $\psi$.  Here is the definition of
$\trs[g]{\cdot}$ for state formulas:
\[
\begin{array}{lr}
\trs[g]{p}  = p \quad\quad\quad\quad& 
\trs[g]{(\ag,\var)\varphi} = \trs[{g[\ag\mapsto \var]}]{\varphi} \\  
\multicolumn{2}{l}{\trs[g]{\estrat\varphi} = \exists
  p_{\act_{1}}^{\var}\ldots \exists
  p_{\act_{\maxmov}}^{\var}. \Big(\phistrat \wedge \trs[g]{\varphi}\Big), }\\[10pt]
\multicolumn{2}{r}{\mbox{where }
\phistrat  =
\A\G\Big(\bigvee_{\act\in\Act}(p_{\act}^{\var}\wedge\bigwedge_{\act'\neq\act}\neg
p_{\act'}^{\var})\Big)}\\
\multicolumn{2}{l}{  
  \trs[g]{\A\psi} = \A(\psiout[g] \rightarrow \trp[g]{\psi})}\\[10pt]
\multicolumn{2}{r}{  
\quad\quad\quad\mbox{where }
\psiout[g] = \G \left(
\bigwedge_{\pos\in\Pos} \left( p_{\pos} \rightarrow
\bigvee_{\jact\in\Act^{\Agt}} 
\left ( \bigwedge_{\ag\in\dom{f}}p_{\jact(\ag)}^{f(\ag)} \wedge \X
            p_{\Delta(\pos,\jact)} \right ) \right )\right)}\\[10pt]
  \multicolumn{2}{l}{
  \trs[g]{f(\varphi_1,\ldots,\varphi_n)}  =
                                       f(\trs[g]{\varphi_1},\ldots,\trs[g]{\varphi_n})}            
\end{array}
\]
and for path formulas:
\begin{align*}
\trp[g]{\varphi} & = \trs[g]{\varphi} &
\trp[g]{\X\psi} 		& = \X\trp[g]{\psi} \\
\trp[g]{\psi\U\psi'}	& =
\trp[g]{\psi}\U\trp[g]{\psi'} &
  \trp[g]{f(\psi_1,\ldots,\psi_n)} & =
                                       f(\trp[g]{\psi_1},\ldots,\trp[g]{\psi_n})
\end{align*}

This translation is identical to that from branching-time \SL to
\QCTLs in all cases, except for the case of functions which is
straightforward. To see that it can be safely lifted to the
quantitative setting, it suffices to observe the following: since
quantification on atomic propositions is restricted in \BQCTLsf to
Boolean values, and atoms in $\AP_\Pos$ also have Boolean values,
$\phistrat$ and $\psiout$ always have value 0 or 1 and thus they can
play exactly the same role as in the qualitative setting: $\phistrat$
holds if and only if the atomic propositions
$p_{\act_1}^\var,\ldots,p_{\act_m}^\var$ indeed code a strategy from
the current state, and $\psiout$ holds on a branch of
$\unfold[{\KS_{\Game}}]{}$ if and only if in this branch each agent
$\ag\in\dom{g}$ follows the strategies coded by atoms $p_\act^\var$.
As a result $\exists p_{\act_{1}}^{\var}\ldots \exists
p_{\act_{\maxmov}}^{\var}. \Big(\phistrat \wedge
\trs[g]{\varphi}\Big)$ maximises over those valuations for the
$p_{\act_i}^\var$ that code for strategies, other valuations yielding
value 0. Similarly, formula $\A(\psiout[g] \rightarrow \trp[g]{\psi})$
minimises over branches that represent outcomes of strategies in $g$,
as others yield value 1.

One can now see that the following holds, where $\varphi$ is an \SLf formula.
\begin{lemma}
\label{lem-redux}
Let $\assign$ be an assignment and $g:\Agt\rightharpoonup \Var$ such that $\dom{g}=\dom{\assign}\cap \Agt$ and for all $a \in \dom{g}$,
$g(\ag) = \var$ implies $\assign(\ag) = \assign(\var)$. Then
\[\semSL{\assign}{\rho}{\varphi}=\sem[{\unfold[\KS_\Game]{}}]{\node_{\rho}}{\trs[g]{\varphi}}\]
\end{lemma}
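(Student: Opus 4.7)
The plan is to proceed by mutual induction on the structure of the state formula $\varphi$, establishing simultaneously the analogous identity for path formulas: for every play $\pi$ corresponding (via the bijection $\rho\mapsto\node_\rho$ extended to infinite paths) to a branch $\tpath_\pi$ of $\unfold[\KS_\Game]{}$ and every $i\in\bbN$,
$\semSL{\assign}{\pi,i}{\psi} = \sem[{\unfold[\KS_\Game]{}}]{(\tpath_\pi)_{\geq i}}{\trp[g]{\psi}}.$
The compatibility condition between $\assign$ and $g$ is preserved at every step.

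\textbf{Routine cases.} For $\varphi=p$, both sides equal $\labels(\last(\rho))(p)$ by definition of the weight function $\lab$ of $\KS_\Game$. The function case $f(\varphi_1,\ldots,\varphi_n)$ and the path cases $\X\psi$, $\psi_1\U\psi_2$, $f(\psi_1,\ldots,\psi_n)$ all follow immediately by applying the respective functions/operators to the inductive hypotheses, since both semantics treat these constructs identically. The binding case $(\ag,x)\varphi'$ is handled by appealing to the inductive hypothesis on $\varphi'$ with the updated pair $(\assign[\ag\mapsto\assign(x)], g[\ag\mapsto x])$, whose compatibility is immediate from the construction.

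\textbf{Path quantifier.} For $\varphi=\A\psi$, the key observation is that $\psiout[g]$ takes values only in $\{0,1\}$, since it is built solely from Boolean-valued atoms (those in $\AP_\Pos$ and the strategy-coding atoms $p_\act^\var$) using only Boolean connectives. One verifies that a branch of $\unfold[\KS_\Game]{}$ starting at $\node_\rho$ satisfies $\psiout[g]$ with value~$1$ iff it corresponds (under the bijection) to an element of $\Out(\assign,\rho)$. Thus in the semantics of $\A(\psiout[g]\to\trp[g]{\psi})$, non-outcome branches contribute value~$1$ (hence do not affect the infimum), while outcome branches contribute $\sem{\cdot}{\trp[g]{\psi}}$, which by the inductive hypothesis equals $\semSL{\assign}{\pi,|\rho|-1}{\psi}$, matching the \FSL semantics.

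\textbf{Strategy quantifier (main obstacle).} For $\varphi=\estrat\varphi'$, the core step is to exhibit a bijection between strategies $\strat\colon\Hist_\Game\to\Act$ and Boolean labellings of $\unfold[\KS_\Game]{}$ by the fresh atoms $p_{\act_1}^\var,\ldots,p_{\act_\maxmov}^\var$ that satisfy $\phistrat$. Since $\phistrat$ is Boolean-valued and forces exactly one $p_{\act_i}^\var$ to hold at each node, a satisfying labelling determines a well-defined strategy $\strat_\lambda$ via $\strat_\lambda(\rho')=\act_i$ iff $p_{\act_i}^\var$ holds at $\node_{\rho'}$; conversely, every strategy yields such a labelling. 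Because the second-order quantifier in \BQCTLsf ranges only over Boolean valuations, the supremum in $\sem{\node_\rho}{\existsp[p_{\act_1}^\var]\cdots\existsp[p_{\act_\maxmov}^\var](\phistrat\wedge\trs[g]{\varphi'})}$ splits into: labellings satisfying $\phistrat$, which contribute $\sem{\node_\rho}{\trs[g]{\varphi'}}$, and labellings violating it, which contribute $0$. Applying the inductive hypothesis with the updated pair $(\assign[x\mapsto\strat_\lambda], g)$, whose compatibility still holds, we conclude that the \BQCTLsf supremum equals $\sup_\strat\semSL{\assign[x\mapsto\strat]}{\rho}{\varphi'}$, which is the \FSL semantics of $\estrat\varphi'$. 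The delicate point to check carefully is that $\phistrat$ genuinely ranges over \emph{all} histories in $\unfold[\KS_\Game]{}$, so that the coded strategies are total on $\Hist_\Game$, ensuring the bijection is exact.
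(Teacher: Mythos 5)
Your proof is correct and follows essentially the same route as the paper: the paper gives no formal inductive proof but justifies the lemma by exactly the two observations you make central --- that $\phistrat$ and $\psiout$ remain Boolean-valued, so that valuations not coding a strategy contribute $0$ to the supremum in the $\existsp$ case while non-outcome branches contribute $1$ to the infimum in the $\A$ case --- deferring everything else to the known correctness of the qualitative translation. Your explicit induction merely fleshes that argument out; the only imprecision is in your final "delicate point": $\phistrat$ evaluated at $\node_{\rho}$ constrains the labelling only on the subtree rooted at $\node_{\rho}$, not on all of $\Hist_{\Game}$, which is harmless because both semantics depend only on the quantified strategy's restriction to $\rho$ and its extensions.
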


As a result, the quantitative model-checking problem for an \SLf
formula $\varphi$, a weighted CGS $\Game$ and a predicate
$\pred\subseteq[0,1]$ can be solved by computing the \BQCTLsf formula
$\varphi'=\trs[\emptyset]{\varphi}$ and the weighted Kripke structure
$\KS_\Game$, and deciding whether
$\semb[\KS_\Game]{\varphi'}\in\pred$, which can be done by
Theorem~\ref{theo-mc-qctlsf}. \bmchanged{This establishes  the upper-bounds in
Theorem~\ref{theo-mc-slf}.
 As in the case of \BQCTLsf, the lower-bounds are obtained by reduction from
the model-checking problem for \EQkCTLs. This reduction is an
adaptation of the one from \QCTLs to \ATL with
strategy context in~\cite{DBLP:journals/iandc/LaroussinieM15}, and
that preserves nesting depth.}

\begin{remark}
  \label{rem-finitely-many-values-slf}
  Lemma~\ref{lem-redux} together with Lemma~\ref{lem-exp-values} imply
  that \FSL formulas also can take only exponentially many values when
  a finite domain is fixed for atomic propositions. This justifies the
  observation of Remark~\ref{rem-max} that supremum and infimum in the
  semantics of \FSL can be replaced with maximum and minimum.
\end{remark}





\section{The case of \FSLOneG}
\label{sec-mc-sloneg}
We now study the fragment \FSLOneG, which is the extension to the
quantitative setting of the one-goal fragment \SLOneG of
\SL~\cite{MMPV12}. 

In order to define the syntax, we need to introduce the notions
of \emph{quantification prefix} and \emph{binding prefix}.  A
quantification prefix is a sequence
$\qntpref = \qnt{x_{1}} \ldots \qnt{x_{n}}$, where
$\qnt{x_{i}} \in \{\estrat[x_{i}], \astrat[x_{i}]\}$ is either an
existential or universal quantification.  For a fixed set of agents
$\Agt = \{a_{1}, \ldots, a_{n}\}$ a binding prefix is a sequence
$\bndpref = (a_{1}, x_{1}), \ldots (a_{n}, x_{n})$, where every agent
in $\Agt$ occurs exactly once.  A combination $\qntpref \bndpref$ is
\emph{closed} if every variable occurring in $\bndpref$ occurs in some
quantifier of $\qntpref$.  We can now give the definition of \FSLOneG
syntax.

\begin{definition}[\FSLOneG Syntax]
  Let $\AP$ be a set of Boolean atomic propositions, and let $\Func
  \subseteq \{f:[0, 1]^m \to [0, 1] \mid m \in \SetN \}$ be a set of functions over $[0,1]$.  
  The set of \FSLOneG formulas is defined by the following grammar:
  \begin{align*}
    \varphi & := p \mid f(\varphi, \ldots, \varphi) \mid \varphi \mid
    \X \varphi \mid \varphi \U \varphi \mid \qntpref \bndpref \varphi,
  \end{align*}
  where $p\in\AP$, $f\in \Func$, and $\qntpref \bndpref$ is a closed combination of a quantification prefix and of a binding prefix.
\end{definition}

Note that all \FSL formulas are sentences, as all  strategy
variables are quantified immediately before being bound to some agent.
The \emph{sentence nesting depth} of an \FSLOneG formula is defined as follows:
\begin{itemize}
\item $\SentDpth(p) = 0$ for every $p \in \AP$;
\item $\SentDpth(f(\varphi_{1}, \ldots, \varphi_{n})) = \max_{1 \leq i
    \leq n}\{\SentDpth(\varphi_{i})\}$;
\item $\SentDpth(\X \psi) = \SentDpth(\psi)$;
\item $\SentDpth(\psi_1 \U \psi_2) = \max \{\SentDpth(\psi_1),
  \SentDpth(\psi_2)\}$.
\item $\SentDpth(\qntpref \bndpref \varphi) = \SentDpth(\varphi) + 1$;
\end{itemize}
Intuitively, the sentence nesting depth measures the number of
sentences, i.e., formulas with no free agent or variable, that are
nested into each other in the formula.

In order to solve the model-checking problem for \FSLOneG, we 
need the technical notion of \emph{concurrent multi-player parity
  game} introduced in~\cite{MMS16}.
\begin{definition}
  \label{def:cmpg}
  A \emph{concurrent multi-player parity game} (CMPG) is a tuple $\Parity = \tuple{\Agt, \Act, \setstates, \state_{\init}, \parFun, \Delta}$, where $\Agt = {0, \ldots, n}$ is a set of agents indexed with natural numbers, $\setstates$ is a set of states, $\state_{\init}$ is a designated initial state, $\parFun: \setstates \to \SetN$ is a priority function, and $\Delta: \setstates \times \Act^{\Agt} \to \setstates$ is a transition function determining the evolution of the game according to the joint actions of the players.
\end{definition}

A CMPG is a game played by players $\Agt = {0, \ldots, n}$ 
 for an infinite number of rounds.  In
each round, the players concurrently and independently choose moves,
and the current state and the action chosen for each player determine
the successor state.  In details we have that each player $i$, with $i
\mod 2 = 0$ is part of the existential (even) team; the other players
are instead part of the universal (odd) team.  Informally, the goal in
a CMPG is to check whether there exists a strategy for $0$ such that,
for each strategy for $1$, there exists a strategy for $2$, and so
forth, such that the induced plays satisfy the parity condition.
Then, we say that the existential team wins the game.  Otherwise the
universal team wins the game.

As shown in~\cite[Theorem 4.1, Corollary 4.1]{MMS16}, one can decide
the winners of a CMPG $\Parity = \tuple{\Agt, \Act, \setstates,
  \state_{\init}, \parFun, \Delta}$ in time polynomial
w.r.t. $\card{\setstates}$ and $\card{\Act}$, and exponential
w.r.t. $\card{\Agt}$ and $k=\max \parFun$ (the maximal priority).

%

\begin{theorem}
  \label{thm:mcatlsf}
  The model-checking problem for closed formulas of \FSLOneG is
  decidable, and \twoexpC.
\end{theorem}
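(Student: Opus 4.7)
The plan is to proceed by induction on the sentence nesting depth, reducing the evaluation of each innermost sentence to solving a CMPG in the sense of Definition~\ref{def:cmpg}, and then replacing that sentence by a fresh weighted atomic proposition whose value at each state equals the computed optimum.

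\textbf{Bottom-up evaluation on $\SentDpth$.} If $\SentDpth(\varphi)=0$, then $\varphi$ is a Boolean/$\Func$-combination of temporal subformulas over atomic propositions of~$\calG$, i.e.\ an \FLTL formula evaluated at $\pos_\init$ only; this can be handled directly by the \FLTL machinery of~\cite{ABK16}. Otherwise, pick an innermost sentence $\sigma = \qntpref \bndpref \psi$ occurring in $\varphi$, where $\psi$ contains no further strategy quantifiers and is thus (modulo rewriting) an \FLTL path formula. By Lemma~\ref{lem-exp-values} (lifted to \FSL as in Remark~\ref{rem-finitely-many-values-slf}), the set of possible satisfaction values of $\psi$ lies in a computable finite approximation $\approxsetval{\psi}\subseteq[0,1]$ of at most exponential size.

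\textbf{Reduction of one sentence to a CMPG.} For each threshold $v\in\approxsetval{\psi}$, use the \FLTL construction of~\cite{ABK16} to obtain a nondeterministic B\"uchi word automaton of size $2^{O(|\psi|^2)}$ accepting the $\approxsetval{}^{AP}$-words whose value is $\geq v$; determinise it into a deterministic parity automaton $\Dauto_v$ of doubly exponential size with polynomially many priorities. Now, given any state $\pos$ of $\calG$, form the synchronous product of $\calG$ (started at $\pos$) with $\Dauto_v$: the state space becomes $\Pos \times Q_{\Dauto_v}$, the transition function is inherited from $\Delta$ and the automaton transition, and the parity function on states comes from $\Dauto_v$. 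Ordering the quantifiers of~$\qntpref$ in sequence, assign index~$2i$ (existential team) to each $\exists$ and index $2i{+}1$ (universal team) to each $\forall$; the binding prefix~$\bndpref$ identifies each game agent with one of these indexed players. The resulting structure is a CMPG $\Parity_{\pos,v}$ in which the existential team wins iff the quantifier alternation can force the value of $\psi$ to be at least~$v$ from~$\pos$. By~\cite[Theorem~4.1]{MMS16}, membership in the winning region of the existential team is decidable in time polynomial in $|\Pos|\cdot|Q_{\Dauto_v}|$ and exponential in the (constant) number of quantifiers and the (polynomial) index; this is doubly exponential in $|\psi|$.

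\textbf{From threshold queries to exact values.} The value $\semSL{\emptyset}{\pos}{\sigma}$ equals the maximum $v\in\approxsetval{\psi}$ for which the existential team wins $\Parity_{\pos,v}$. We~compute this maximum state by state; there are polynomially many states and exponentially many candidate values, so the global cost for one sentence remains doubly exponential. We~then extend $\calG$ by a fresh atomic proposition $p_\sigma$ whose weight at $\pos$ is the computed value, and replace $\sigma$ in $\varphi$ by $p_\sigma$. The rewritten formula has strictly smaller sentence nesting depth, the same satisfaction value at $\pos_\init$, and lives over a slightly enlarged WCGS; iterate until the depth reaches~$0$.

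\textbf{Complexity.} Each round of the induction costs 2EXPTIME and decreases $\SentDpth$ by one; since $\SentDpth(\varphi)\leq|\varphi|$, the total running time stays doubly exponential. The matching \twoexpC lower bound is inherited from the classical one-goal fragment \SLOneG (equivalently, from \ATLs), which embeds into \FSLOneG by restricting~$\Func$ to $\{\vee,\neg\}$ and atomic propositions to $\{0,1\}$; model-checking is already 2EXPTIME-hard in that case. The main technical obstacle is organising the threshold enumeration so that the exponential blow-up in $|\approxsetval{\psi}|$ interacts correctly with the doubly exponential determinisation of the \FLTL automaton and the MMS16 CMPG-solving bound, i.e.\ ensuring that the one-goal shape of~$\qntpref\bndpref$ suffices to keep everything within 2EXPTIME rather than incurring an extra exponential per quantifier alternation as in the general \FSL case.
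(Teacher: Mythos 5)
Your proposal follows essentially the same route as the paper's proof: induction on sentence nesting depth, threshold enumeration over the finite value set of Lemma~\ref{lem-exp-values}, the \FLTL automaton of~\cite{ABK16} determinised into a parity automaton, a product with the game yielding a CMPG solved via~\cite{MMS16}, and replacement of inner sentences by fresh weighted propositions (the paper routes this replacement through a Boolean combination $\bigvee_\pos(p_\pos\wedge p_\sigma)$, but that is cosmetic). One imprecision to fix: the deterministic parity automaton obtained from the exponential-size B\"uchi automaton has a singly \emph{exponential} (not polynomial) number of priorities, and the number of quantifiers is not constant but bounded by $|\varphi|$ --- both still yield the \twoexp bound since CMPG solving is exponential in these parameters, but the bookkeeping should be stated as in the paper.
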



\begin{proof}
  We let $\calG=\tuple{\AP,\Agt,\Act,\Pos,\pos_\init,\Delta,\labels}$
  be a WCGS and we consider a formula of the form $\qntpref \bndpref
  \varphi$.
  We also assume, for simplicity, that $\qntpref = \estrat[x_0] \astrat[x_1], \ldots, \qnt{x_k}$, that is, quantifiers perfectly alternate between existential and universal.\footnote{To reduce to this case, one can either collapse the agents occurring in the same quantification block, or interleave them with dummy agents quantified with the other modality.}
  Note that the formula $\qntpref \bndpref \varphi$ is a sentence, therefore the choice of an assignment is useless.
  Moreover, recall that, by Lemma~\ref{lem-redux} and in particular Remark~\ref{rem-finitely-many-values-slf}, the set $V(\qntpref \bndpref \varphi)$ of possible values is bounded by $2^{\card{\qntpref \bndpref \varphi}}$.

  We proceed by induction on the sentence nesting depth.  As base case
  let $\SentDpth(\qntpref \bndpref \varphi) = 1$, i.e., there is no
  occurrence of neither quantifiers nor bindings in $\varphi$.  Then,
  $\varphi$ can be regarded as an \FLTL formula that can be
  interpreted over paths of the WKS $\KS =
  \tuple{\AP,\Pos,\pos_\init,\relation,\lab}$ where $\relation =
  \{(\pos_1,\pos_2) \mid \exists \jact \in \Act^\Agt .\ \pos_2 =
  \Delta(\pos_1,\jact)\}$.
  Now, thanks to~\cite[Theorem 3.1]{ABK16}, for every value $v \in
  V(\qntpref \bndpref \varphi)$, there exists a nondeterministic
  generalised B\"{u}chi word automaton
  $\mthname{B}_{\Kripke, \varphi, P_v}$, with $P_v = [v, 1]$ that
  accepts all and only the infinite paths $\pi$ of $\KS$
  such that $\sem[\KS]{\pi}{\varphi} \in P_v$.
  \pbcomment{formulation not very consistent with the proof of QCTLf where we speak of infinite words over $V^\Act$}
  \gpcomment{Is the problem that words recognised by the automaton are not on the alphabet $V^\Act$?}
  Following~\cite{Piterman07}, we can convert $\mthname{B}_{\Kripke, \varphi, P_v}$ into a deterministic parity word automaton $\Automaton_{\Kripke, \varphi, P_v} = \tuple{\Pos, Q, q_{\init}, \delta, \parFun}$ of size doubly-exponential in the size of $\varphi$ and index bounded by $2^{\card{\varphi}}$.
	
  At this point, define the following CMPG $\Parity = \tuple{\Agt',
    \Act, \setstates, \state_\init, \parFun', \Delta'}$ such that
  \begin{itemize}
  \item $\Agt' = \{0, \ldots, k\}$ is a set of agents, one for every
    variable occurring in $\qntpref$, ordered in the same way as in
    $\qntpref$ itself;
  \item $\Act$ is the set of actions in $\Game$;
  \item $\setstates = \Pos \times Q$ is the product of the states of
    $\Game$ and the automaton $\Automaton_{\Kripke, \varphi, P_v}$;
  \item $\state_{\init} = (\pos_\init, q_\init)$ is the pair given by
    the initial states of $\Game$ and $\Automaton_{\Kripke, \varphi,
      t}$, respectively;
  \item $\parFun'(\pos, q) = \parFun(q)$ mimics the parity function of
    $\Automaton_{\Kripke, \varphi, P_v}$;
  \item if $\jact \in \Act^{\Agt'}$, $\Delta'((\pos, q), \jact) =
    (\Delta(\pos, \bndpref(\jact)), \delta(q,\pos))$ mimics the
    execution of both $\Game$ and $\Automaton_{\Kripke, \varphi, P_v}$.
  \end{itemize}

  The game emulates two things, one per each component of its
  state-space.  In the first, it emulates a path $\pi$ generated in
  $\Game$.  In the second, it emulates the execution of the automaton
  $\Automaton_{\Kripke, \varphi, P_v}$ when it reads the path $\pi$
  generated in the first component.  By construction, it results that
  every execution $(\pi, \eta) \in \Pos^\omega \times Q^\omega$ in
  $\Parity$ satisfies the parity condition determined by $\parFun'$
  if, and only if, $\sem[\KS]{\pi}{\varphi} \in P_v$.  Moreover,
  observe that, since $\Automaton_{\Kripke, \varphi, P_v}$ is
  deterministic, for every possible history $\rho$ in $\Game$, there
  is a unique partial run $\eta_{\rho}$ that makes the partial
  execution $(\rho, \eta_\rho)$ possible in $\Parity$.  This makes the
  sets of possible strategies $\StrSet(\pos_\iota)$ and
  $\mthset{Str}_{\Parity}(\state_{\iota})$ in perfect bijection.
  $\Parity$ has a winning strategy if and only if
  $\semSL{}{\pos_\iota}{\qntpref \bndpref \varphi} \in P_v$.  In order
  to compute the exact value of $\semSL{}{\pos_\iota}{\qntpref
    \bndpref \varphi}$, we repeat the procedure described above for
  every $v \in V(\qntpref \bndpref \varphi)$ and take the maximum $v$
  of those for which $\semSL{}{\pos_\iota}{\qntpref \bndpref \varphi}
  \in P_v$.
	
  For the induction case, assume we can compute the satisfaction value
  of every \FSLOneG formula with sentence nesting depth at most $n$,
  and let $\SentDpth(\qntpref \bndpref \varphi) = n + 1$.  Observe
  that, for every subsentence $\qntpref' \bndpref' \varphi'$ of
  $\qntpref \bndpref \varphi$, we have that $\SentDpth(\qntpref'
  \bndpref' \varphi') \leq n$ and so, by induction hypothesis, we can
  compute $\semSL{}{\pos}{\qntpref' \bndpref' \varphi'}$ for every $\pos \in
  \Pos$.  Now, introduce a fresh atomic proposition $p_{(\qntpref'
    \bndpref' \varphi')}$ whose weight in $\Game$ is defined as
  $\labels(v)(p_{(\qntpref' \bndpref' \varphi')}) =
  \semSL{}{v}{\qntpref' \bndpref' \varphi'}$ and a set of fresh atomic
  propositions $p_\pos$, one for every $\pos \in \Pos$, whose weights in
  $\Game$ are defined as $\labels(\pos)(p_{\pos}) = 1$ and
  $\labels(\pos)(p_{\pos'}) = 0$ if $\pos \ne \pos'$.
  Now, consider the Boolean formula
  $$
  \Phi(\qntpref' \bndpref' \varphi') = \bigvee_{\pos \in \Pos} (p_\pos \wedge p_{(\qntpref' \bndpref' \varphi')})
  $$
	
  Observe that every disjunct is a conjunction of the form $p_\pos
  \wedge p_{(\qntpref' \bndpref' \varphi')}$, whose satisfaction value
  on a state $\pos'$ is the minimum among the weights of $p_{\pos}$
  and $p_{(\qntpref' \bndpref' \varphi')}$.  This can be either $0$,
  if $\pos \neq \pos'$ or $\semSL{}{\pos}{\qntpref' \bndpref'
    \varphi'}$, if $\pos = \pos'$.  Now, the big disjunction takes the
  maximum among them.  Therefore, we obtain that
  $\semSL{}{\pos}{\Phi(\qntpref' \bndpref' \varphi')} =
  \semSL{}{\pos}{\qntpref' \bndpref' \varphi'}$.  This allows us to
  replace every occurrence of $\qntpref' \bndpref' \varphi'$ in
  $\varphi$ with the Boolean combination $\Phi(\qntpref' \bndpref'
  \varphi')$, making the resulting formula to be of sentence nesting
  depth $1$.  Thus, we can apply the procedure described in the base
  case, to compute $\semSL{}{\pos}{\qntpref \bndpref \varphi}$.

  Regarding the complexity, note that the size of $\Parity$ is
  $\card{\Pos} \cdot \card{Q}$ that is in turn linear with respect to
  the size of $\Game$ and doubly exponential in the size of $\varphi$.
  This is due to the fact that the automaton $\Automaton_{\Kripke,
    \varphi, P_v}$ results from the construction of the NGBW
  $\mthname{B}_{\Kripke, \varphi, P_v}$, of size singly exponential in
  $\card{\varphi}$ and the transformation to a DPW, that adds up
  another exponential to the construction.  On the other hand, the
  number of priorities in $\Parity$ is only singly exponential in
  $\card{\varphi}$, and it is due to the fact that the transformation
  from NGBW to DPW requires a singly exponential number of priorities.
  Therefore, the CMPG $\Parity$ can be solved in time polynomial
  w.r.t. the size $\Game$ and double exponential in $\card{\varphi}$.
  Such \twoexp procedure is executed a number of time exponential in
  $\varphi$, which is still \twoexp.
	
	Hardness follows from that of \SLOneG~\cite{MMPV12}.
\end{proof}


\section{Future work}
\label{sec-future}
We introduced and studied \FSL, a~formalism for specifying quality and
fuzziness of strategic on-going behaviour. Beyond the applications
described in the paper, we highlight here some interesting directions
for future research. In~classical temporal-logic model checking,
\emph{coverage} and \emph{vacuity} algorithms measure the sensitivity
of the system and its specifications to mutations, revealing errors in
the modelling of the system and lack of exhaustiveness of the
specification \cite{CKV06}. When applied to \FSL, these algorithms can
set the basis to a formal reasoning about classical notions in game
theory, like the sensitivity of utilities to price changes, the
effectiveness of burning money \cite{HR08,SR16} or tax increase
\cite{CDR06}, and more. Recall that our \FSL model-checking algorithm
reduces the problem to \BQCTLsf, where the quantified atomic
propositions take Boolean values. It~is interesting to extend \BQCTLsf
to a logic in which the quantified atomic propositions are associated
with different agents, which would enable easy specification of
controllable events. Also, while in our application the quantified
atomic propositions encode the strategies, and hence the restriction
of their values to $\{0,1\}$ is natural, it is interesting to study
\QCTLsf, where quantified atomic propositions may take values in
$[0,1]$.





\bibliography{bibexport}

\end{document}